\newcommand{\indvix} {$IVIX^{(I)}$~}
\newcommand{\spx} {$S\&P 500$}
\newcommand{\bX}{\ensuremath{\mathbf{IVIX}}}
\newcommand{\bPhi}{\ensuremath{\boldsymbol\Phi}}
\newcommand{\bPsi}{\ensuremath{\boldsymbol\Psi}}
\newcommand{\btheta}{\ensuremath{\boldsymbol\theta}}
\newcommand{\bepsilon}{\ensuremath{\boldsymbol\epsilon}}
\newcommand{\bbeta}{\ensuremath{\boldsymbol\eta}}
\newcommand{\bSigma}{\ensuremath{\boldsymbol\Sigma}}
\newcommand{\bOmega}{\ensuremath{\boldsymbol\Omega}}
\newcommand{\mC}{\ensuremath{\mathcal{C}}}
\title{\vspace{-1.6cm} Dynamic industry uncertainty networks and the business cycle\thanks{We thank Charlie Cai, Daniel Greenwald, Stephen Terry, Emil Verner and Jacky Qi Zhang for their valuable comments. We thank the participants at the 3rd Workshop on Macroeconomic Research (Cracow University), CFE 2020, and Southwestern Finance Association 2021 Annual Meeting. Mattia Bevilacqua gratefully acknowledges the support of the Economic and Social Research Council (ESRC) in funding the Systemic Risk Centre [grant number ES/K002309/1 and ES/R009724/1]. Jozef Barunik gratefully acknowledges support from the Czech Science Foundation under the EXPRO GX19-28231X project.}}
\author[1]{Jozef Barun\'{\i}k}
\author[2]{Mattia Bevilacqua}
\author[3]{Robert Faff}%
\affil[1]{Charles University and Czech Academy of Sciences}
\affil[2]{London School of Economics, Systemic Risk Centre}
\affil[3]{Bond University, Australia}
\affil[3]{University of Queensland, Australia}
\date{}
\begin{document}

\maketitle


\begin{abstract}
We argue that uncertainty network structures extracted from option prices contain valuable information for business cycles. Classifying U.S. industries according to their contribution to system-related uncertainty across business cycles, we uncover an uncertainty hub role for the communications, industrials and information technology sectors, while shocks to materials, real estate and utilities do not create strong linkages in the network. Moreover, we find that this ex-ante network of uncertainty is a useful predictor of business cycles, especially when it is based on uncertainty hubs. The industry uncertainty network behaves counter-cyclically in that a tighter network tends to associate with future business cycle contractions.
\end{abstract}

\smallskip
\textbf{Keywords:} Financial Uncertainty, Industry Network, Options Market, Business Cycle.


\newpage
\section{Introduction}

Throughout history, industrial structure has witnessed influential economic and financial cycles in which different sectors seem to take on a leading role.\footnote{The ascendancy of technology and telecommunications is a notable recent example. The rapidly growing internet sector accounted for \$2.1 trillion of the U.S. economy in 2018 or about 10\% of the nation's gross domestic product (GDP). Tech companies such as Apple, Google, and Amazon are leading the stock market; any little variation in their quarterly earnings or stock market prices can move the entire index. } Fluctuations in performance, valuation and interconnection have been crucially important not only for swings in financial markets, but also for the real economy and the prediction of business cycles. Despite the intensifying linkages emerging from economic activities among industries, we do not fully understand the influence of such networks on the real economy. Given this motivational backdrop, we develop forward-looking measures of industry uncertainty network structures, and we study how industry-specific shocks to option investors' expectations linked to investor fear evolve over time as well as how these shocks relate to macroeconomic activity.

Understanding the differential actions and performance of the largest firms within an economy's various industrial sectors offers key insight into the aggregate economy.\footnote{For example, \cite{gabaix2011} reports that the total sales of the top 50 firms accounted for 25\% of GDP in 2005. As another example, in December 2004, a \$24 billion one-time Microsoft dividend boosted growth in personal income from 0.6\% to 3.7\% (Bureau of Economic Analysis, January 31, 2005).}
\cite{acemoglu2012} document that significant aggregate fluctuations can originate from firm-specific microeconomic shocks or disaggregated sectors due to interconnections between different firms and sectors, functioning as a potential propagation mechanism of idiosyncratic shocks throughout the economy.
Notably, \cite{atalay2017} concludes that 83\% of the variation in aggregate output growth is attributable to idiosyncratic industry-level shocks, while \cite{gabaix2011} contends that a salient feature of business cycles is that firms and sectors comove.

To capture uncertainty connected to industrial structures (and especially driven by large firms), recent literature has exploited information about volatility or uncertainty shocks highlighting the importance of network measures to capture the propagation of volatility mechanisms \citep[e.g.][]{acemoglu2012,carvalho2013,gabaix2016,acemoglu2017,baqaee2019,herskovic2020}. \cite{carvalho2013} argue that the sector-specific ``fundamental'' microeconomic volatility has explanatory power and can serve as an early warning signal of swings in macroeconomic volatility. Further, financial uncertainty shocks and fluctuations in risk have been identified as one of the main drivers of the U.S. business cycle \citep[see][]{bloom2009,christiano2014,leduc2016,basu2017,bloom2018,ludvigson2020}.

Despite these enormous efforts, a recent surge of interest relies on historical or ex-post analysis. In contrast, we propose an ex-ante approach employing measures extracted from option prices of individual firms. We argue that information reflected in the implied second moment of firms provides an enhanced overview of the future risk and performance of economic sectors and the firms therein, shedding light on their future outlook and evaluation in a forward-looking manner through the options investors' expectations directly related to industry-linked ``fears'' \citep[e.g.][]{Diebold2014,BARUNIK2019}.

To infer such prospective information, we exploit an extensive sample of option prices to extract option-based uncertainty measures for key major firms across the full U.S. industrial landscape. We produce aggregate uncertainty measures for each industry at a daily frequency encompassing three recession periods (including the most recent Covid-19 crisis). Based on the time-varying parameter approximating models, we then construct a measure of ex-ante industry uncertainty network that reflects investor expectations of future uncertainty relevant to each industry. In doing so, we follow the work of \cite{Diebold2014,barunik2020dynamic} who argue that time-varying variance decompositions characterize well how shocks to uncertainty create dynamic networks.\footnote{Our approach is intimately connected to network node degrees, mean degrees, and connectedness measures of \cite{Diebold2012} and \cite{Diebold2014}.} We assess the varying role of industries contribution to shocks to uncertainty across the business cycles. Our analysis then quite naturally transitions to the question of how useful are such uncertainty network measures in predicting economic activity. We argue that, compared to the currently existing approaches, industry-based uncertainty network measures represent more informative (leading) propagation channels highly relevant to predict business cycles given their forward-looking feature.


The essence of our contribution is linked to two relatively recent strands of literature. The first strand is concerned with uncertainty measures and their relationship with business cycle fluctuations \citep[e.g.][]{bloom2009,bloom2014,jurado2015,ludvigson2020}.\footnote{See also \cite{bachmann2013}, \cite{bachmann2014}, \cite{christiano2014}, \cite{decker2016}, \cite{bloom2018}, and \cite{arellano2019}.} 
The second strand focuses on the role of sector-level or firm-to-firm linkages in microeconomic shocks and their relationship with the aggregate economy and changes in business conditions \citep[see, e.g.][]{gabaix2011,acemoglu2012,carvalho2013,barrot2016,acemoglu2017,baqaee2019} or the survey in \cite{carvalho2019}. Notably, connected to the second strand of literature, are studies on the role of production networks as a propagation mechanism from individual firms and/or industries to the real economy \citep[see, e.g.][]{foerster2011,digiovanni2014,ozdagli2017,atalay2017,garin2018,auer2019,lehn2020}.

However, in contrast to this extant body of work, we adopt purely market-based networks as a mechanism to dynamically study the propagation of shocks to uncertainty from industries flowing to the real economy.
In addition, a recent body of work suggests that sector-specific shocks have become more volatile relative to aggregate shocks. For instance, \cite{garin2018} develop an ``islands'' model with two sectors showing a decline in the contribution of the aggregate volatility shocks to the business cycle in favour of sector-specific shocks.
Similarly, \cite{lehn2020} show that the empirical network is dominated by a few ``investment hubs'' that produce the majority of investment goods, which are highly volatile, and have large aggregate effects on fluctuations in the business cycle. Inspired by these latest findings, our work also builds on a similar framework. Augmenting the definition of ``hubs'' from the input-output network literature, we characterize critical ``uncertainty hubs'', as industries that largely transmit and/or receive uncertainty across the business cycle, versus ``non-hubs'', being those industries that are (largely) neutral across business cycles.

A shock to an uncertainty hub may directly trigger consequences for that specific industry, for the whole system and the broader economy. While it can affect production, employment and growth at the hub, it can also generate larger uncertainty spillovers, changes in prices, growth and production of other sectors in the system, affecting the broader economy. Our proposed framework with respect to the role of uncertainty hubs in driving fluctuations in the aggregate economy is reminiscent of investment-specific technology shocks \citep[e.g.][]{greenwood2000,justiniano2010}.
Given this framework, we hypothesize that the industry network constructed from uncertainty hubs contains greater predictability for business cycles compared to the non-hubs uncertainty network.

The main findings of our paper are as follows. First, we identify industries showing a stronger (versus weaker) contribution of shocks to uncertainty and their propagation, thus playing an essential role within the aggregate industry uncertainty network. Specifically, the communications, IT and industrials play a key role, being classified as the main uncertainty hubs.
In contrast, materials, utilities and real estate are classified as non-hubs. Further, our analysis suggests that the hub role of the financial industry is limited to the global financial crisis (GFC). Second, we have further insights relevant to the time-series perspective. We find that the ex-ante industry network rises sharply during the dot com bubble, the GFC, increasing steadily afterwards. Additionally, we detect a rising importance of hubs-specific shocks in the last decade implying that shocks to hubs account for the majority of aggregate fluctuations post-GFC. Third, our empirical exercise shows that the ex-ante industry uncertainty network generates an important channel through which sector-specific shocks propagate and it is a useful leading predictor of business cycles up to one year. Changes in business cycle indicators occur in line with changes in the comovement of uncertainty across industries, and they are especially due to changes in the network connectedness in uncertainty hubs. 
These results suggest that uncertainty hubs are a major channel for targeting policy effectiveness. Our results are robust to several robustness checks and the inclusion of additional control variables.

The remainder of this paper is organized as follows. Section \ref{IndustryVIXData} describes the data and sampling used in our study.  Section \ref{connectednessmethodology} sets out the essence of the TVP-VAR network connectedness method applied to our chosen industry setting. Section \ref{findingstotal} studies the dynamic aggregate uncertainty network connectedness, and section \ref{totnetdynamic} presents the findings for the dynamic idiosyncratic uncertainty network connectedness through the business cycle. Section \ref{predicting} studies the predictive ability of the networks for the real economy. Section \ref{conclusion} concludes the paper. Additional results are relegated to the appendix of the paper.

\section{Industry uncertainty, investor beliefs and option prices}  \label{IndustryVIXData}

According to \cite{kozeniauskas2018}, conceptually three different types of uncertainty have been used in existing research: measures of uncertainty about macroeconomic outcomes (macro uncertainty); measures of the dispersion of firm outcomes (micro dispersion); and measures of the uncertainty that people have about what others believe (higher-order uncertainty). A common origin for the various uncertainty shocks can be found in macro uncertainty, shocks of which generate positive covariances between all pairs of uncertainty types.

The VIX is a common proxy used in the financial economics literature to measure the unpredictability of future aggregate outcomes, or macro uncertainty \citep[see][]{bloom2009,bekaert2013,leduc2016,kozeniauskas2018,bhattarai2020}. The VIX index (often referred to as ``fear index''), introduced by the Chicago Board Options Exchange (CBOE), is a model-free forward-looking measure implied by options prices and reflects investors' expectations about uncertainty in the stock market. It has also been linked with future macro outcomes in the aforementioned studies.


To study the dynamic uncertainty network, rather than looking at the whole U.S. stock market, we develop a forward-looking measure of uncertainty reflecting investor beliefs derived from option price data at the industry level. To capture industry uncertainty, we use forward-looking uncertainty measures that are intimately related to the VIX methodology and derived from the uncertainty of single firms within the main U.S. industries. More details on the composition of the 11 U.S. industries are provided in the Appendix, section \ref{breakdownindustries}.

According to \cite{kozeniauskas2018}, micro uncertainty describes an increase in the uncertainty that firms have about their outcomes due to changes in idiosyncratic variables. To this end, while our proposed network measure might be influenced by a common structure in volatilites, importantly our measure contains additional critical information about idiosyncratic volatilities. Indeed, single firm VIX measures can be a valid proxy for both macro and micro uncertainty, therefore capturing macro volatility as well as dispersion of firm-specific outcomes. Moreover, such micro uncertainty is challenging to measure due to the scarcity of data on firm-specific beliefs. Our measures overcome this limitation since can be computed at a high (daily) frequency. Finally, options-based measures of risk are superior to historical volatility measures with respect to both predictive power and the set of information they encompass \citep[e.g.][]{christensen1998,santa2010,BARUNIK2019}.



\subsection{Extracting model-free industry uncertainty from option prices}  \label{IndividualVIX}

For each chosen firm, we compute a model-free implied volatility index as detailed in the Appendix, section \ref{sec: appB}. This measure reflects expectations about investor uncertainty regarding the individual firm over the coming 30-day horizon. We then aggregate the individual firm information to construct a measure of ex-ante uncertainty at the industry level.
Such a measure reflects the industry expected uncertainty over the next 30 days.

More formally, the ex-ante industry uncertainty measure $\text{IVIX}^{(\text{Ind})}_t$ is constructed by taking the time-varying weighted average of the main five stocks in each industry and at each point in time through our sample period as:
\begin{equation}\label{Ind_VIX}
   \text{IVIX}^{(\text{Ind})}_t =  \sum_{s \in N^{(\text{Ind})}}\mathcal{W}_{t}^{(s)} \text{VIX}_{t}^{(s)}
\end{equation}
where $\text{Ind} \in \{1,\ldots,11\}$ represents the industry we consider, $s$ is an index for one of the $N^{(\text{Ind})}$ firms included in the given industry at time $t$, $\text{VIX}_{t}^{(s)}$ is the implied volatility for an individual stock $s$, and $\mathcal{W}_{t}^{(s)}$ is the time-varying market capitalization weight of that specific stock $s$ computed as the ratio between the time-varying market capitalization of the stock and total market capitalization of all stocks included in the industry.


\subsection{Data}  \label{Data}

We use daily data encompassing the sample period January 2000 to May 2020
in each of the following 11 U.S. industries: consumer discretionary (CD), communications (CM), consumer staples (CS), energy (E), financials (F), health care (HC), industrials (IN), information technology (IT), materials (M), real estate (RE) and utilities (U).
More specifically we merge two data sources. We use options data from OptionMetrics from January 2000 to December 2018 to compute the individual stock VIXs. We expand the coverage of the VIX time series from January 2019 until May 2020 aided by the IHS Markit's Totem Vanilla Volatility Swap data set from which we collect broker-dealers consensus prices for the volatility strike of the swaps.\footnote{The Totem database is a service within IHS Markit that gathers a large variety of derivatives marks from the major broker-dealers and returns consensus prices. In the volatility swaps service, contributors are requested to price the volatility strike at which the swap would have an inception price of zero which should be the traders best estimate of mid-market. Before 2019 January, the majority of the data in the Totem Vanilla Volatility (or Variance) Swap services were monthly, therefore would not have served our purpose.} This aids the expansion of the individual stock VIXs since we replace $\text{VIX}_{t}^{(s)}$ with the volatility strike of the swaps and we then compute the industry uncertainty measures as in equation \ref{Ind_VIX}.\footnote{On the equality between VIX and the strike of a volatility swap see, for instance, \cite{filipovic2016} and \cite{cheng2019}. For more details on the VIX index, the variance swap market and their relationship see \cite{Carr&Wu2005} and \cite{carr2009}.} The other financial information such as market capitalization and trading volume regarding the selected stocks are collected from Bloomberg.

Overall, our data set includes options prices of 69 U.S. firms. We select the largest constituent stocks in each U.S. industry at every point in time according to time-varying market capitalizations, new IPOs, exclusions of the stocks from the \spx, or missing data. For instance, for some industries such as industrials, the same five stocks have been adopted throughout the sample period. In more dynamic sectors such as IT, we observe several changes in the stocks ranking within our sample period. In cases where options on a specific firm have been only issued in recent times, we include the next ranked firm as a substitute to always ensure at least five stocks for every sector across our time period in that industry with available data.\footnote{The only exceptions are the materials and real estate industries. For the first, we use only four stocks, monthly interpolated between 01-2019 and 04-2019 due to data availability. This is because before 05-2019 for a few stocks in this sector the submission service was monthly. Between 01- and 04-2019 no single firm volatility data is available for the real estate industry, hence we replace $\text{IVIX}^{(\text{RE})}_t$ directly with the real estate sector volatility measure submitted to IHS Markit/Totem.} Table \ref{StocksInfo} in the appendix shows the included stocks within each industry and their available time period. Figure \ref{IVIX} in the appendix depicts an example of individual firm uncertainty $\text{VIX}^{(i)}$.\footnote{The CBOE has introduced stock market VIX series for a few stocks in the U.S. Comparing our calculations, with available period CBOE counterparts, show a correlation, on average, exceeding 94\%. This minor divergence is likely due to the interpolation among the two closest expiration dates to 30 days used in the CBOE methodology. For the data collected by IHS Markit spanning a shorter time frame, the correlation between the consensus volatility and CBOE VIX series is, on average, above 97\%, this again due to the interpolation used in the CBOE methodology.}

The selected stocks account for more than 58\% of the U.S. \spx~ market capitalization, thus being a valid proxy for the 11 U.S. industries, and being representative of a not trivial fraction of the U.S. GDP \citep[e.g.][]{gabaix2011}. A large representation of the U.S. stock market and its industries is what matters when studying these as economic and business cycle drivers.
 We report the descriptive statistics for the industry uncertainty indexes in Table \ref{INDVIXDescript}.

\begin{table}[hbt!]
 \centering
 \caption{Industry Uncertainty $\text{IVIX}^{(\text{Ind})}_t$ Descriptive Statistics}
  \label{INDVIXDescript}
  \begin{threeparttable}
  \footnotesize{
 \centering
\begin{tabular}{cccccccccccc}
 \toprule
$\text{Ind}$ & CD & CM & CS & E & F & HC & IN & IT & M & RE & U \\
 \toprule
Mean& 0.321&   0.318&    0.230&    0.270&    0.370&    0.252&    0.300&    0.342&    0.289&    0.343&    0.232\\
Standard Dev. & 0.116 &   0.121&    0.096&    0.106&    0.279&    0.089&    0.148&    0.144 &   0.111&    0.191&    0.107\\
Min & 0.133 &   0.1305&    0.115&    0.128&    0.145&    0.136&    0.139&    0.138&    0.152 &   0.122&    0.104\\
Max & 0.876 & 1.137 & 0.911 & 1.404 & 2.682 & 0.815 & 1.409 & 0.994 & 1.191 & 2.225 & 1.018 \\
Skewness & 1.363 &   1.538&    1.777&    3.249&    3.727&    1.552&    2.409&    1.504&    2.220&    2.769&    2.246\\
Kurtosis & 4.658&    5.755  &  6.363 &  21.240&   20.468 &   5.969 &  11.411 &   5.011&   11.247 &  17.057 &   9.788\\
\bottomrule
\end{tabular}
\begin{tablenotes}
\item {\scriptsize \textit{Notes}: This table reports the descriptive statistics for the industry uncertainty $\text{IVIX}^{(\text{Ind})}_t$ index of 11 industries: consumer discretionary (CD), communications (CM), consumer staples (CS), energy (E), financials (F), health care (HC), industrials (IN), information technology (IT), materials (M), real estate (RE) and utilities (U). The time period is from 03-01-2000 to 29-05-2020, at a daily frequency.}
\end{tablenotes}
}
\end{threeparttable}
\end{table}

From Table \ref{INDVIXDescript} we observe that the financial industry uncertainty shows the highest mean, followed by the information technology and real estate industry uncertainty measures with consumer staples and utilities showing the lowest mean values. The financial sector is also found to be the one with the highest standard deviation, and skewness of uncertainty measure. On the other hand, consumer staples, energy, health care and utilities are found to have lower standard deviations. Consumer discretionary and IT show lower skewness and kurtosis compared to the other industries uncertainty measures. The minimum values of industry uncertainty range between 10\% and 15\%, while the maximum values present a wider range with financials and real estate leading with the highest values. As an example, we plot uncertainty measures for the IT, consumer staples and financial sectors in Figure \ref{INDVIX}.

\begin{figure}[ht!]
 \begin{center}
\caption{\textbf{Industry Uncertainty}} \label{INDVIX}
\includegraphics[width=\textwidth]{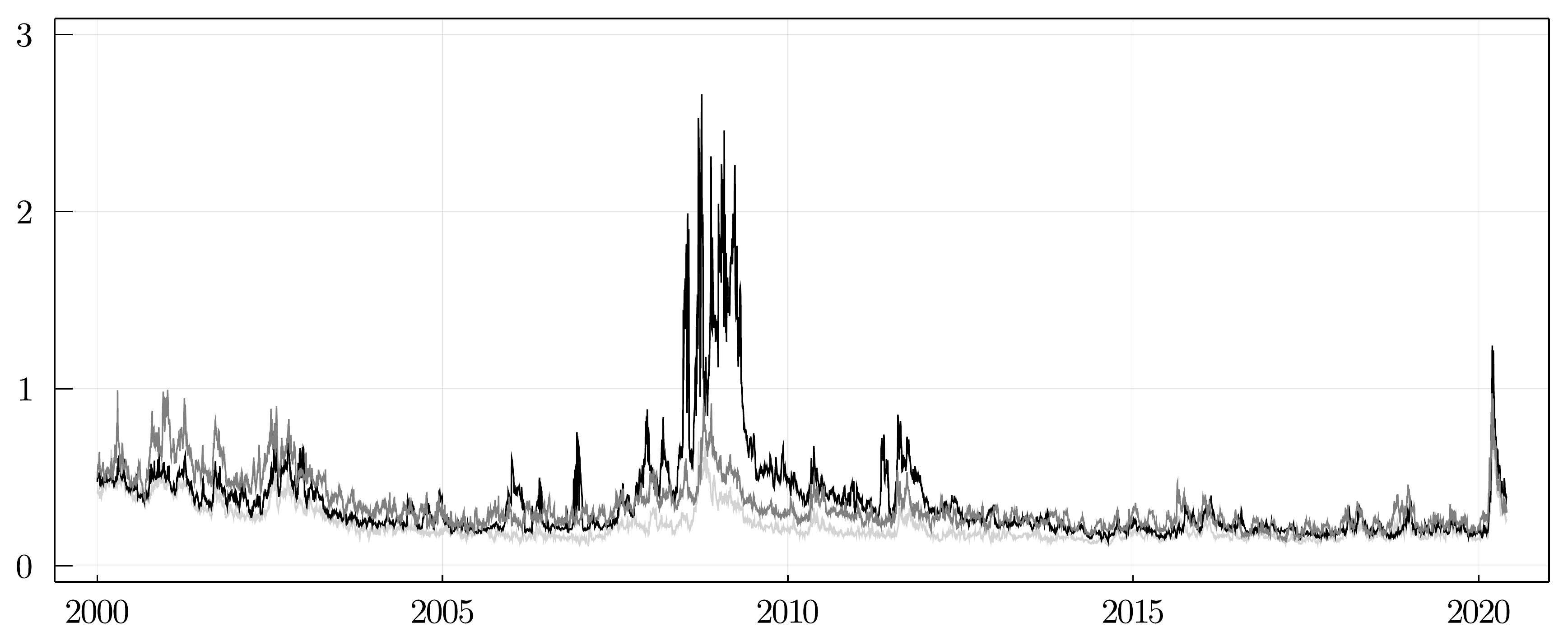}
\caption*{\textit{Notes}: This figure shows $\text{IVIX}^{(\text{Ind})}_t$ series for financials (black), IT (grey) and consumer staples (light grey) covering the period 03-01-2000 to 29-05-2020 at a daily frequency. }
\end{center}
\end{figure}

We observe that the IT sector dominates the other two during the dot com bubble and technologic boom in the early 2000s. The financial sector shows a dominant role during the GFC in 2008 and the Eurozone debt crisis in 2010 and 2011. In the most recent times, the IT sector exhibits greater uncertainty in comparison to the financial sector, showing how uncertainty might come from technology innovation and growth. Consumer staples show peaks in the early 2000s and during the GFC, however being always below the other two series and overall low throughout the sample period. All three uncertainty measures spiked in correspondence to the recent Covid-19 pandemic in March 2020. Among them, we observe that financials and IT increased during the Covid-19 crisis in a more pronounced manner than consumer staples impacted less by the Covid-19 crisis.

\section{Measurement of dynamic industry uncertainty networks}  \label{connectednessmethodology}

Industries are connected directly through counterparty risk, contractual obligations or other general business conditions of the firms. High-frequency analysis of such networks requires generally unavailable high-frequency information. In contrast, option prices and uncertainty measured in high frequencies reflect the decisions of many agents assessing risks from the existing linkages. Hence the pure market-based approach we use, in contrast to other network techniques, allows us to monitor the network on a daily frequency as well as to exploit its forward-looking strength with minimal assumptions.

Looking at how a shock to the expected uncertainty of a firm $j$ transmits to future expectations about the uncertainty of a firm $k$, we will define weighted and directed networks. Aggregating the information about such networks can provide industry level uncertainty characteristics that will measure how strongly the investors' expectations are interconnected. Importantly, we will focus on the time variation of such networks.

\subsection{Link to the network literature and causality of proposed measures}

The measures that we use are intimately related to modern network theory. Algebraically, the adjacency matrix capturing information about network linkages carries all information about the network and any sensible measure must be related to it. As noted by \cite{Diebold2014}, a variance decomposition matrix defining network adjacency matrix is then readily used as a network connectedness that is  related to network node degrees and mean degree. Currently, studies examine, almost exclusively, static networks mimicking time dynamics with estimation from an approximating window. In contrast to this approach, we follow \cite{barunik2020dynamic} who employ a locally stationary TVP-VAR that allows us to estimate the adjacency matrix for a network at each point in time with possibly large dimension. Dynamic networks defined by such time-varying variance decompositions are then more sophisticated than classical network structures in several ways.\footnote{For previous literature on the importance of the variation in the transmission of uncertainty shocks over time \citep[see][]{caggiano2014,mumtaz2018,alessandri2019}.}

In a typical network, the adjacency matrix contains a set of zero and one entries, depending on the node being linked or not, respectively. In the above notion, one interprets variance decompositions as weighted links showing the strength of the connections. In addition, the links are directed, meaning that the $j$ to $k$ link is not necessarily the same as the $k$ to $j$ link, and hence, the adjacency matrix is not symmetric. Therefore we can define weighted, directed versions of network connectedness statistics readily that include degrees, degree distributions, distances and diameters. Using the time-varying approximating model, we will define a truly time-varying adjacency matrix that will describe a dynamic network.

The proposed network connectedness measure is also directly connected to the vast economic literature about the importance of network effects in macroeconomics \citep[see][]{acemoglu2012,carvalho2013,gabaix2016,barrot2016,acemoglu2017,baqaee2019,altinoglu2020,acemoglu2020}. \cite{herskovic2020} state that measuring network effects is crucial to explain the joint evolution of firm volatility distributions. The network analysis has developed conceptual frameworks and an extensive set of tools to effectively measure interconnections among the units of a network, see for instance the survey by \cite{carvalho2019}.


Moreover, our network connectedness measure improves on shocks to uncertainty measured ex-post \citep[e.g.][]{Diebold2014}. Employing implied measures of uncertainty gives one access to a different set of information in uncertainty reflecting market participants' expectations of future movements in the underlying asset, a set of information found superior compared to ex-post measures of uncertainty \citep[see][]{christensen1998}. We are naturally interested in capturing shocks to the ex-ante uncertainty of industry $j$ that will transmit to future expectations about the uncertainty of industry $k$.\footnote{\cite{BARUNIK2019} stated that option based measures of uncertainty reflect decisions of many agents assessing the risks from the existing linkages. The options market-based approach allows us to monitor the network on a daily frequency as well as use its forward-looking strength in contrast to other network techniques based on balance sheet and other information which is generally unavailable at high frequency.}

Finally, we note that our measures can have a direct causal interpretation. \cite{rambachan2019econometric} provide an important discussion about the causal interpretation of impulse response analysis in the time series literature. In particular, they argue that if an observable time series is shown to be a potential outcome time series, then generalized impulse response functions have a direct causal interpretation. Potential outcome series describe at time $t$ the output for a particular path of treatments.

In the context of our study, paths of treatments are shocks. The assumptions required for a potential outcome series are natural and intuitive for a typical economic and/or financial time series: i) they depend only on past and current shocks; ii) series are outcomes of shocks; and iii) assignment of shocks depend only on past outcomes and shocks. The dynamic adjacency matrix we introduce in the next section is a transformation of generalized impulse response functions. Therefore, the dynamic adjacency matrix and all measures that stem from manipulations of its elements possess a causal interpretation; thus establishing the notion of causal dynamic network measures.

\subsection{Construction of dynamic uncertainty network}

To formalize the discussion, we construct a dynamic uncertainty network of industries from the industry implied volatilities computed for the main U.S. industries and we interpret the TVP-VAR model approximating its dynamics as a dynamic network following the work of \cite{barunik2020dynamic}. In particular, consider a locally stationary TVP-VAR of lag order $p$ describing the dynamics of industry uncertainty as
\begin{equation}
\label{eq:VAR2}
\bX_{t,T} = \bPhi_1 (t/T) \bX_{t-1,T} + \ldots + \bPhi_p (t/T) \bX_{t-p,T} + \bepsilon_{t,T},
\end{equation}
where $\bX_{t,T} = \left(\text{IVIX}^{(\text{1})}_{t,T},\ldots,\text{IVIX}^{(\text{N})}_{t,T}\right)^{\top}$ is a doubly indexed $N$-variate time series of industry uncertainties, $\boldsymbol \epsilon_{t,T} = \sum_{}^{-1/2}(t/T) \eta_{t,T}$ with $\eta_{t,T} \sim NID(0,I_M)$, and $\bPhi (t/T) = \Big(\bPhi_1 (t/T),....,\bPhi_p (t/T)\Big)^T$ are the time-varying autoregressive coefficients. Note that $t$ refers to a discrete time index $1\le t \le T$ and $T$ is an additional index indicating the sharpness of the local approximation of the time series by a stationary one. Rescaling time such that the continuous parameter $u \approx t/T$ is a local approximation of the weakly stationary time-series \citep{dahlhaus1996kullback}, we approximate the $\bX_{t,T}$ in a neighborhood of a fixed time point $u_0=t_0/T$ by a stationary process $\widetilde{\bX}_t (u_0)$ as
\begin{equation}
 \widetilde{\bX}_t (u_0) = \bPhi_1 (u_0) \widetilde{\bX}_{t-1}(u_0) \ldots + \bPhi_p (u_0) \widetilde{\bX}_{t-p}(u_0) + \bepsilon_{t}.
\end{equation}
The process has time-varying Vector Moving Average VMA($\infty$) representation \citep{dahlhaus2009empirical,roueff2016prediction}
\begin{equation}
\bX_{t,T} = \sum_{h=-\infty}^{\infty} \boldsymbol\Psi_{t,T,h}\boldsymbol\epsilon_{t-h}
\end{equation}
where parameter vector $\bPsi_{t,T,h} \approx \bPsi_h(t/T)$ is a time-varying impulse response function characterized by a bounded stochastic process.\footnote{Since $\bPsi_{t,T,h}$ contains an infinite number of lags, we approximate the moving average coefficients at $h=1,\ldots,H$ horizons.} The connectedness measures rely on variance decompositions, which are transformations of the information in $\bPsi_{t,T,h}$ that permit the measurement of the contribution of shocks to the system. Since a shock to a variable in the model does not necessarily appear alone, an identification scheme is crucial in calculating variance decompositions. We adapt the generalized identification scheme in \cite{pesaran1998generalized} to locally stationary processes.

The following proposition establishes a time-varying representation of the variance decomposition of shocks from asset $j$ to asset $k$. It is central to the development of the dynamic network measures since it constitutes a dynamic adjacency matrix.

\newtheorem{prop}{Proposition}
\begin{prop}[Dynamic Adjacency Matrix]\footnote{Note to notation: $[\boldsymbol A]_{j,k}$ denotes the $j$th row and $k$th column of matrix $\boldsymbol A$ denoted in bold. $[\boldsymbol A]_{j,\cdot}$ denotes the full $j$th row; this is similar for the columns. A $\sum A$, where $A$ is a matrix that denotes the sum of all elements of the matrix $A$.} \label{prop:1}
Suppose $\bX_{t,T}$ is a locally stationary process, then the time-varying generalized variance decomposition of the $j$th variable at a rescaled time $u=t_0/T$ due to shocks in the $k$th variable forming a dynamic adjacency matrix of a network is
  \begin{equation}
    \Big[ \btheta^H(u) \Big]_{j,k} = \frac{\sigma_{kk}^{-1}\displaystyle\sum_{h = 0}^{H}\Bigg( \Big[\bPsi_h(u) \bSigma(u)\Big]_{j,k} \Bigg)^2}{\displaystyle \sum_{h=0}^{H} \Big[ \bPsi_h(u) \bSigma(u) \bPsi_h^{\top}(u) \Big]_{j,j}}
  \end{equation}
where $\bPsi_h(u)$ is a time-varying impulse response function.
\end{prop}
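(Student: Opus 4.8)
The plan is to reduce the statement to the classical generalized forecast-error variance decomposition of \cite{pesaran1998generalized}, applied to the frozen-time stationary approximation $\widetilde{\bX}_t(u)$, and then to lift the resulting formula back to the doubly indexed process using the local-stationarity apparatus set up above. First I would fix the rescaled time point $u=t_0/T$ and work with the stationary approximating process, which by the cited results admits the VMA($\infty$) representation $\widetilde{\bX}_t(u)=\sum_{h=0}^{\infty}\bPsi_h(u)\bepsilon_{t-h}$ with $\mathrm{Var}(\bepsilon_t)=\bSigma(u)$. The target entry $[\btheta^H(u)]_{j,k}$ is then nothing but the share of the $H$-step forecast-error variance of coordinate $j$ attributable to shocks in coordinate $k$, computed for this single stationary model; all $u$-dependence is carried passively through $\bPsi_h(u)$ and $\bSigma(u)$.

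Next I would write the $H$-step forecast error of the stationary model as $\widetilde{\bX}_{t+H}(u)-\mathbb{E}[\widetilde{\bX}_{t+H}(u)\mid\mathcal{F}_t]=\sum_{h=0}^{H}\bPsi_h(u)\bepsilon_{t+H-h}$, so that the total forecast-error variance of coordinate $j$ equals $\sum_{h=0}^{H}[\bPsi_h(u)\bSigma(u)\bPsi_h^{\top}(u)]_{j,j}$, which is exactly the denominator. For the numerator I would adopt the generalized identification scheme: rather than orthogonalizing via a Cholesky factor, condition on a one-standard-deviation innovation in coordinate $k$ and integrate out the remaining shocks. Under the Gaussian assumption $\eta_{t,T}\sim NID(0,I_M)$ the conditional mean is linear, $\mathbb{E}[\bepsilon_t\mid\epsilon_{kt}=\sqrt{\sigma_{kk}}]=\bSigma(u)\,e_k\,\sigma_{kk}^{-1/2}$, which yields the generalized impulse response $\sigma_{kk}^{-1/2}\bPsi_h(u)\bSigma(u)e_k$ at horizon $h$.

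I would then form the contribution of coordinate-$k$ shocks to the forecast-error variance of coordinate $j$ by squaring the $j$th component of each generalized impulse response and summing over $h$, giving $\sigma_{kk}^{-1}\sum_{h=0}^{H}\big([\bPsi_h(u)\bSigma(u)]_{j,k}\big)^2$, the stated numerator. Dividing by the total forecast-error variance from the previous step produces $[\btheta^H(u)]_{j,k}$ exactly as claimed. The algebra here is a direct transcription of \cite{pesaran1998generalized}, with the constant matrices replaced by their time-$u$ counterparts, so no genuinely new identities are required.

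The hard part is not the linear algebra but justifying that freezing time is legitimate: I must argue that the variance decomposition of the truly non-stationary $\bX_{t,T}$ agrees, up to the approximation order inherent in local stationarity, with that of $\widetilde{\bX}_t(u)$. This step leans on the $L^2$-bound $\lVert \bX_{t,T}-\widetilde{\bX}_t(t/T)\rVert = O_p(1/T)$ and on the boundedness and smoothness of $\bPsi_h(\cdot)$ and $\bSigma(\cdot)$ as bounded stochastic processes in $u$, together with the truncation of the VMA representation at horizon $h=H$ noted above. Making precise the sense in which the frozen-time decomposition approximates the time-$t$ object, and verifying that the generalized scheme of \cite{pesaran1998generalized} remains well defined after the rescaling, is where the only real care is needed.
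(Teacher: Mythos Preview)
Your proposal is correct and follows essentially the same route as the paper: freeze time at $u$, work with the stationary approximation $\widetilde{\bX}_t(u)$, and apply the Pesaran--Shin generalized variance decomposition to obtain the ratio. The only presentational difference is that the paper derives the numerator by computing the \emph{reduction} in forecast-error variance from conditioning on the $k$th shock---forming the conditional error $\boldsymbol\xi^{k,H}(u)=\sum_{h=0}^{H}\bPsi_h(u)[\bepsilon_{t+H-h}-\mathbb{E}(\bepsilon_{t+H-h}\mid\epsilon_{k,t+H-h})]$, its covariance $\bOmega^{k,H}(u)$, and then taking $[\bOmega^H(u)-\bOmega^{k,H}(u)]_{j,j}$---whereas you go straight to the generalized impulse response $\sigma_{kk}^{-1/2}\bPsi_h(u)\bSigma(u)e_k$ and square its $j$th entry. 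The two computations are algebraically equivalent.

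One remark: you flag the local-stationarity justification as ``the hard part,'' but the paper does not actually carry this out with any more care than you do. It records the approximation bound $|\bX_{t,T}-\widetilde{\bX}_t(u)|=O_p(|t/T-u|+1/T)$ and then simply \emph{defines} the time-varying GFEVD as the GFEVD of the frozen process, without proving a convergence statement for the decomposition itself. So your concern is legitimate from a rigor standpoint, but it is not something the paper addresses either; for the purposes of matching the paper's proof you can treat the frozen-time formula as definitional.
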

\begin{proof}
	See Appendix \ref{app:proofs}.
\end{proof}

It is important to note that proposition \ref{prop:1} defines the dynamic network completely. Naturally, our adjacency matrix is filled with weighted links showing strengths of the connections over time. The links are directional, meaning that the $j$ to $k$ link is not necessarily the same as the $k$ to $j$ link. Therefore the adjacency matrix is asymmetric.

To characterize network uncertainty, we define total dynamic network connectedness measures in the spirit of \cite{Diebold2014,barunik2020dynamic} as the ratio of the off-diagonal elements to the sum of the entire matrix

\begin{equation}\label{network}
\mC^H(u) = 100\times\displaystyle \sum_{\substack{j,k=1\\ j\ne k}}^N \Big[\widetilde \btheta^H(u)\Big]_{j,k}\Bigg/\displaystyle \sum_{j,k=1}^N \Big[\widetilde \btheta^H(u)\Big]_{j,k}
\end{equation}
where $\Big[\widetilde \btheta^H(u)\Big]$ is a normalized $\btheta$ by the row sum. This measures the contribution of forecast error variance attributable to all shocks in the system, minus the contribution of own shocks. Similar to the aggregate network connectedness measure that infers the system-wide strengths of connections, we define measures that will reveal when an individual industry is a transmitter or a receiver of uncertainty shocks in the system. We use these measures to proxy dynamic network uncertainty. The dynamic directional connectedness that measures how much of each industry's $j$ variance is due to shocks in other industry $j\ne k$ in the economy is given by
\begin{equation} \label{from}
\mC_{j\leftarrow\bullet}^H(u) = 100\times\displaystyle \sum_{\substack{k=1\\ k\ne j}}^N \Big[\widetilde \btheta^H(u)\Big]_{j,k}\Bigg/\displaystyle \sum_{j,k=1}^N \Big[\widetilde \btheta^H(u)\Big]_{j,k},
\end{equation}
defining the so-called \textsc{from} connectedness. Note one can precisely interpret this quantity as dynamic from-degrees (or out-degrees in the network literature) that associates with the nodes of the weighted directed network we represent by the dynamic variance decomposition matrix. Likewise, the contribution of asset $j$ to variances in other variables is
\begin{equation} \label{to}
\mC_{j\rightarrow \bullet}^H(u) = 100\times\displaystyle \sum_{\substack{k=1\\ k\ne j}}^N \Big[\widetilde \btheta^H(u)\Big]_{k,j}\Bigg/\displaystyle \sum_{j,j=1}^N \Big[\widetilde \btheta^H(u)\Big]_{k,j}
\end{equation}
and is the so-called \textsc{to} connectedness. Again, one precisely interprets this as dynamic to-degrees (or in-degrees in the network literature) that associates with the nodes of the weighted directed network that we represent by the variance decompositions matrix. These two measures show how other industries contribute to the uncertainty of industry $j$, and how industry $j$ contributes to the uncertainty of others, respectively, in a time-varying fashion. Further, the \textsc{net} dynamic connectedness showing whether an industry is inducing more uncertainty than it receives from other industries in the system can be calculated as the difference between \textsc{to} and \textsc{from} is as $\mathcal{C}_{j,\textsc{net}}^H(u) =   \mathcal{C}_{j \rightarrow \bullet}^H (u)  - \mathcal{C}_{j \leftarrow \bullet}^H(u)$ and the \textsc{agg} connectedness measure as $\mathcal{C}_{j,\textsc{agg}}^H (u) =   \mathcal{C}_{j \rightarrow \bullet}^H (u)  + \mathcal{C}_{j \leftarrow \bullet}^H(u)$.

Finally, to obtain the time-varying coefficient estimates, and the time-varying covariance matrices at a fixed time point $u=t_{0}/T$, $\bPhi_{1}(u),...,\bPhi_{p}(u)$ $\bSigma(u)$, we estimate the approximating model in (\ref{eq:VAR2}) using Quasi-Bayesian Local-Likelihood (QBLL) methods \citep{petrova2019quasi}. Specifically, we use a kernel weighting function that provides larger weights to observations that surround the period whose coefficient and covariance matrices are of interest. Using conjugate priors, the (quasi) posterior distribution of the parameters of the model are available analytically. This alleviates the need to use a Markov Chain Monte Carlo (MCMC) simulation algorithm and permits the use of parallel computing. Note also that in using (quasi) Bayesian estimation methods, we obtain a distribution of parameters that we use to construct network measures that provide confidence bands for inference. We detail the estimation algorithm in Appendix \ref{app:estimate}.

\section{The dynamics of the network connectedness of industry uncertainties} \label{findingstotal}

The technological and housing market bubbles, the commodity crash, and the Covid-19 pandemic are a few major examples that show how a dramatic increase in uncertainty and different investors' expectations can rise sharply in many alternative industries. Being able to temporally and precisely characterize the industry-based network dynamics is crucial given that industries can swiftly change their characteristics and macro-economic roles. Working with dynamic network estimates, we can characterize and assess industry uncertainty connections in a timely and forward-looking manner according to the precise events leading to a more or less connected industry uncertainty network. This measurement provides new insights about the propagation of the ex-ante uncertainty shocks over different phases of the business cycle, and it identifies periods in which the U.S. industries' uncertainty was tightly connected.

We compute the dynamic aggregate network connectedness through equation \ref{network} and present its dynamics in Figure \ref{INDConnectedness}. We identify several cycles mainly driven by key events that took place in our sample such as the dot com bubble in the early 2000s, the housing market bubble, the 2007-2009 GFC, and the most recent Covid-19 crisis. Some events might be described as bursts that rapidly subside, others might be characterized by a more continuous pattern and trend. We also split the time period into inversions, recessions and expansions following the NBER classification. Inversions are marked between July 2000 and March 2001 and between September 2006 and December 2007.\footnote{The FOMC raised the target fed funds rate by 25 basis points on June 29, 2006 and lowered the target by 50 basis points on September 18, 2007. \cite{adrian2008} identified September 2006 as the end of the tightening cycle because during that month the one-month fed futures rate went from higher than the spot rate to lower. Due to the unprecedented causes of the pandemic recession in 2020, this has resulted in a downturn with different characteristics and dynamics than prior recessions. Hence, we are unable to establish an inversion period for the Covid-19 crisis that we signal only as a recession from February 2020. See also the NBER website: \url{https://www.nber.org/cycles.html}.} The recessions are marked between April 2001 and November 2001, between January 2008 and June 2009, and between February 2020 until the end of the sample, while the other years are marked as expansions.

\begin{figure}[ht!]
 \begin{center}
\caption{\textbf{Dynamic Network Connectedness of Industry Uncertainties}} \label{INDConnectedness}
\includegraphics[width=\textwidth]{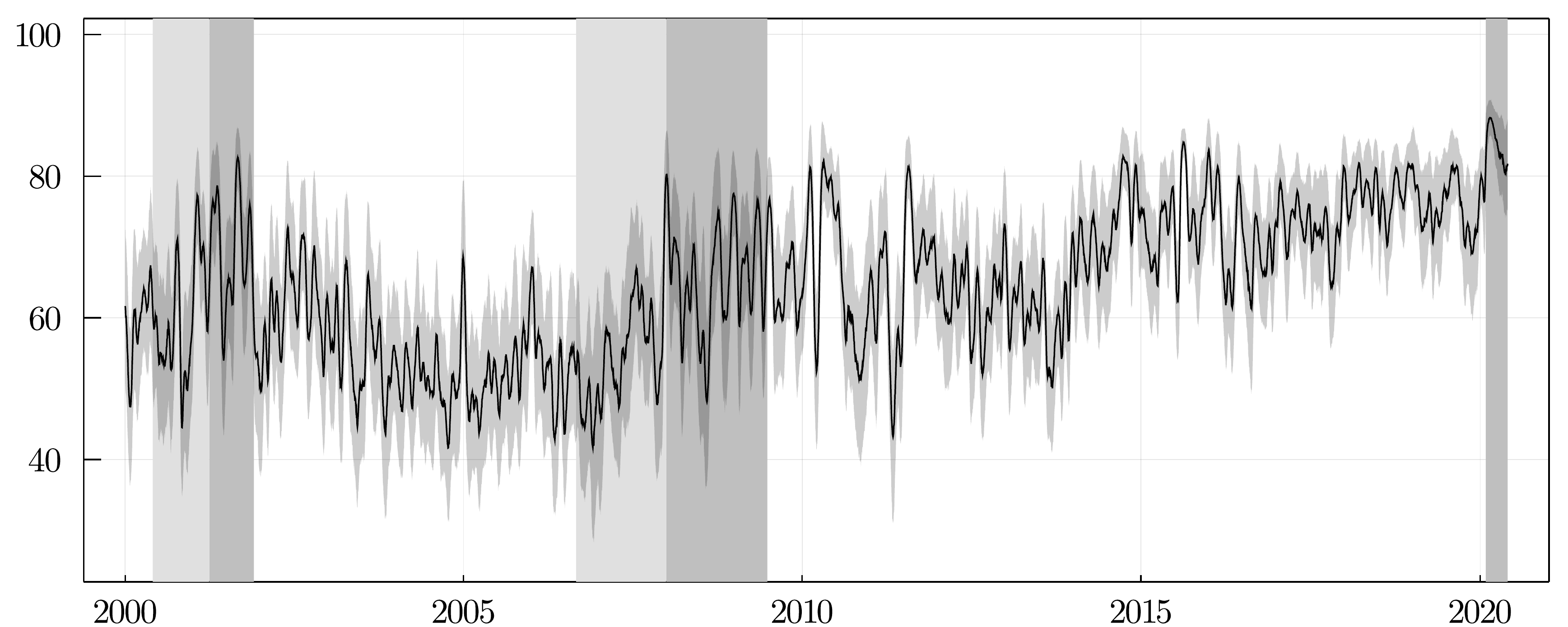}
\caption*{\textit{Notes}: This figure shows the dynamic uncertainty network connectedness with respect to the 11 U.S. industries estimated during the 03-01-2000 -- 29-05-2020 period at a daily frequency. Inversions (light grey area) are marked between July 2000 and March 2001 and between September 2006 and December 2007. The recessions (grey area) are marked between April 2001 and November 2001, between January 2008 and June 2009, and between February 2020 until the end of the sample, while the other years are marked as expansions. Note the network connectedness is plotted with two standard deviation percentiles of the measure.}
\end{center}
\end{figure}

We document the system to be strongly connected with values fluctuating around 60\% in the first half of 2000. The first cycle starts with the burst of the tech bubble in 2000, and with the network measure climbing from about 60\% to 68\%, and increasing up to about 80\% in the second half of 2001 as a response to the dot com bubble strengthening U.S. industries' uncertainty connections via shock to uncertainty from the technology industry. The index recovers to the initial level until 2004, hitting minimum values in our sample at the end of 2004 before spiking again. After that, uncertainty network connectedness shows a new lower average level, fluctuating around 50\% until the second half of 2007, the only exception being a peak at the end of 2005 which might be due to the U.S. housing bubble showing a connectedness level up to 67\%.

We observe the index recording a significant upward movement from the beginning of 2007 to 2009 reaching a level close to 80\%, in response to the high uncertainty during the 2007–2009 GFC spreading from the financial industry to other industries. Several cycles can be detected during the 2007--2009 GFC: the first between the first quarter of 2007 and August 2007 reflecting the U.S. credit crunch; the second in January--March 2008 (panic in stock and foreign exchange markets, and Bear Stearns' takeover by JP Morgan), the collapse of Lehman Brothers in September 2008 showing a spike from 47\% to 75\% in our network connectedness, and lastly in the first half of 2009 when the financial crisis started to propagate among all other industries, increasing the average network connectedness level.

Uncertainty connectedness spikes again in line with the two phases of the European sovereign debt crisis, in 2010 and the second half of 2011, reaching one of the highest levels, up to that point, close to 80\%. We then observe a drop in 2012, this being followed by a quite calm period from 2012 to mid-2013. Connectedness spikes again at the end of 2013 due to trade wars and energy turmoils reaching levels above 70\%, and twice at the end of 2014 and in mid-2015 reaching levels overcoming 80\%. Connectedness peaks in correspondence of Brexit in 2016 and at the end of 2017, and eventually in 2020 due to the coronavirus outbreak, reaching its all-time maximum value in March 2020 (a level of almost 90\%), signalling the beginning of the Covid-19 crisis. This reflects the tight connectedness among all industries in the coronavirus period since almost all industries have been severely affected.

The fluctuations of the aggregate uncertainty network across crises, market downturns and expansions
open up for a further investigation of the role of each industry uncertainty network characteristics. The U.S. industries appear to be more connected after the GFC and even more with the most recent Covid-19 crisis. We still lack an understanding of which industries drive the tightening or loosening of the network according to different business cycles. The next sections aim to clarify these points exploiting the precise time-varying estimation of the uncertainty network first, and its forward-looking properties in the last section.

\section{Uncertainty networks across the business cycles}  \label{totnetdynamic}

In addition to the aggregate network characteristics, we can classify each industry based on their expected contribution to shocks to uncertainty in the system across different phases of the business cycle. Specifically, we are interested to identify transmitters, receivers as well as industries being hubs of uncertainty. To this end, we classify industries according to the $\mathcal{C}_{\textsc{net}}^H$ and $\mathcal{C}_{\textsc{agg}}^H$ characteristics of dynamic uncertainty network.

A specific shock to uncertainty related to any industry, especially when the most influential ones are affected, can trigger major consequences for the other industries generating an aggregate impact on the whole network, tightening or weakening the uncertainty network, as well as being ultimately transmitted to the real economy. For instance, a tightening in the industry network may be connected to drops in real activity, representing a timely monitoring tool for immediate interventions by the Federal Reserve to sustain the business cycle.

\subsection{Hubs, non-hubs and business cycles}

In the case of positive or negative \textsc{net} measure, computed as the difference between \textsc{to} (equation \ref{to}) and \textsc{from} (equation \ref{from}) values, an industry is deemed to be an uncertainty transmitter or receiver, respectively. An industry receiving or transmitting shocks to uncertainty with an intermediate level can be classified as a \textit{moderate} transmitter or receiver, respectively, and may contribute to the uncertainty propagation in the system in a mild manner. An industry transmitting shocks to the system more (less) then receiving shocks from the system is labelled as a \textit{transmitter} (\textit{receiver}). An industry showing high values of both directional measures reflected by high \textsc{agg} values is playing an active role in the transmission of uncertainty shocks and is denoted as ``uncertainty hub'', and is an industry that contributes the most to uncertainty shocks within the network. Conversely, a \textit{neutral} industry showing low \textsc{agg} values is denoted as ``uncertainty non-hub''.

Industries might have changed their roles in terms of their contribution to shocks to uncertainty according to the specific economic cycle. Accordingly, we average the network characteristics across each of the three business cycle phases (inversions, recessions and expansions) as well as over the total period. Table \ref{BusinesscycleHubs} provides the details.

Financials and IT are detected as the main uncertainty hubs in the inversion and recession periods, reflecting the role of these two industries in the dot com and GFC, respectively. Also, the consumer discretionary and industrials are classified as uncertainty hubs during recessions. The IT industry is found to be the main uncertainty hub during all cycles consistently, showing a time-invariant role as a key industry in terms of the contribution of shocks to uncertainty, especially during the dot com and Covid-19 recessions. During expansion periods, the communication industry also plays a hub role in addition to the IT industry. In contrast, materials, real estate and utilities show the smallest values of network statistics and are classified as uncertainty non-hubs.

The IT, communication and industrial industries are classified as the main uncertainty hubs within the total period, with positive \textsc{net} characteristics paired with the highest \textsc{agg} values. This finding highlights the important role that the information and communication technology (ICT) industries have played in the last two decades in the system \citep[e.g.][]{jorgenson2001,bloom2012}.
Consumer discretionary and energy industries can also be classified as uncertainty hubs given their high \textsc{agg} statistics. Conversely, we find that financials, materials, real estate and utilities are overall classified as uncertainty non-hubs within the whole sample. For instance, Table \ref{BusinesscycleHubs} reveals that the IT industry, the main uncertainty hub, contributes to the network connectedness characteristics two and three-times more in comparison to the M and U industries, respectively. Market participants forward-looking expectations show that shocks to uncertainty in one of the hubs are valued differently from the corresponding shocks to uncertainty in non-hubs.

Notably, the financial industry uncertainty has been transmitting differently within different market settings, mainly during inversions and recessions, but it is overall classified as a non-hub.\footnote{Studies show how the financial and banking sector may represent a major channel in transmitting the shocks across markets during crises \citep[e.g.][]{kaminsky1999,tai2004,baur2012}.} This finding reflects the ability of authorities to influence the U.S. financial sector in the aftermath of the GFC through accommodative and unconventional monetary policies. To some, such policy interventions aimed at restoring the functioning of the financial sector during the Great Recession might have been key to avoid a second Great Depression \citep[see][]{bianchi2020}.\footnote{To note that
The IMF October 2017 Global Financial Stability Report (GFSR) finds that the global financial system continues to strengthen in response to extraordinary policy support, regulatory enhancements, and the cyclical upturn in growth.} This has also been accompanied by stronger harmonization of financial regulatory standards (e.g. the Basel capital framework). The financial sector has seen one of the most-pronounced stock market booms on record during 2009-2018. It is therefore not surprising that there has been a low level of uncertainty within the financial industry and, as a consequence, transmitted to the rest of the system.


\begin{table} [ht!]
  \centering
  \caption{\textbf{Aggregate $\mathcal{C}_{\textsc{NET}}$ and $\mathcal{C}_{\textsc{AGG}}$ across business cycles}}    \label{BusinesscycleHubs}
  \begin{threeparttable}
  \scriptsize{
  \centering
  \begin{tabular}{ccccccccccccccccc}
  \toprule
 &&  \multicolumn{3}{c}{\textbf{Inversion}} &&  \multicolumn{3}{c}{\textbf{Recession}}&&  \multicolumn{3}{c}{\textbf{Expansion}} &&  \multicolumn{3}{c}{\textbf{Total Period}}\\		
 \cmidrule{1-5}	 \cmidrule{7-9}	\cmidrule{11-13}	\cmidrule{15-17}	
  &&	  NET	&AGG & AGG $\%$   &&	 NET & AGG & AGG $\%$  && NET	&AGG  & AGG $\%$  && NET	&AGG &  AGG $\%$ \\
 \cmidrule{1-5}	 \cmidrule{7-9}	\cmidrule{11-13}	\cmidrule{15-17}	
CD&&	-1.72&	16.77 & 9.8   &&	0.49&	35.02  & 10.9&&	-0.98&	25.61 & 11.0&&	-0.71&	29.61 & 11.1\\
CM&&	-1.55&	13.07& 7.6    &&	-0.29&	28.88& 9.0&&  	1.37&	27.16& 11.6&&	1.55&	30.49& 11.5\\
CS&&	-0.87&	12.12& 7.1    &&	-0.15&	32.70& 10.2&&	-1.54&	20.93& 9.0 &&	-1.50&	24.27& 9.1\\
E&&	    -4.52&	16.88& 9.8    &&	-0.43&	30.70& 9.5&&	 -0.43&	25.09& 10.7&&	-0.80&	28.27 & 10.6\\
F&&	    3.76&	21.07& 12.3   &&	0.01&	40.55& 12.6&&	0.04&	18.02& 7.7&&	     0.54&	22.15& 8.3\\
HC&&	-0.76&	14.23& 8.3    &&	-2.15&	29.70& 9.2&&  	-0.29&	23.63& 10.1&&	-0.56&	26.45 & 9.9\\
IN&&	 0.92&	20.63& 12.1   &&	-0.21&	40.09& 12.5&&	0.26&	26.71& 11.4&&	0.17&	31.47 & 11.9\\
IT&&	4.42&	25.09& 14.7   &&	2.77&	39.74& 12.4&&	1.93&	28.48& 12.2&&	2.23&	33.71 & 12.7\\
M&&	    -0.33&	10.15& 5.9    &&	-1.77&	15.99& 5.0&&	-1.01&	14.62& 6.3&&	    -0.98&	15.84 & 6.0\\
RE&&	1.24&	12.01& 7.1    &&	1.55&	17.18& 5.3&&	0.97&	13.89& 6.0&&	    0.51&	14.22 & 5.3\\
U&&	    -0.58&	9.11& 5.3     &&	0.19&	11.04& 3.4&&	-0.24&	9.41& 4.0&&	    -0.43&	9.59 & 3.6\\
\bottomrule
 \end{tabular}
  \caption*{\scriptsize \textit{Notes}:  The table shows the average \textsc{net} and \textsc{agg} values with respect to the 11 U.S. industries' uncertainty network. When the \textsc{net} measure is positive an industry can be classified as a \textsc{net} marginal transmitter, while, when negative, it can be classified as a \textsc{net} marginal receiver. The highest values of the \textsc{agg} network statistics are associated with uncertainty hubs, while the lowest with uncertainty non-hubs. The statistics are reported for the business cycle main phases, namely inversion, recession, expansion, aggregated, and also for the total period, namely from 03-01-2000 to 29-05-2020, at a daily frequency.
  }
  }
  \end{threeparttable}
\end{table}

After having classified the industries based on their contribution to uncertainty across business cycles, we compute separate forward-looking networks extracted from uncertainty hubs only and uncertainty non-hubs only. To this end, we input the $\text{IVIX}^{(\text{Ind})}_t$ of hubs only (consumer discretionary, communications, energy, industrials and IT) and of non-hubs only (financials, materials, real estate and utilities) in the network model in equation \ref{network}.

Figure \ref{hubsplot} plots the dynamics of both network connectedness characteristics. We observe that the uncertainty network extracted from hubs shows a higher degree of integration compared to the ones extracted from non-hubs. The difference in uncertainty produced by hubs versus non-hubs becomes stronger in the post-GFC and more recent years, implying that shocks in hubs such as ICT industries and industrials have played a key role increasingly contributing to shocks in uncertainty within the system and aggregate fluctuations over time. This finding is consistent with the development of the ICT industries in the last decades. Such hubs are in fact industries providing services to other industries, sharing a role in propagating uncertainty shocks to other sectors \citep[see][]{bloom2012}. This channel can also be because uncertainty hubs are sectors likely to have financial importance and a large and fast-growing market capitalization. Thus, our proposed framework for the role of uncertainty hubs in driving fluctuations in the aggregate economy is reminiscent of investment-specific technology shocks \citep[e.g.][]{greenwood2000,justiniano2010}.

\begin{figure}[ht!]
 \begin{center}
\caption{\textbf{Uncertainty Hubs and non-Hubs Networks}} \label{hubsplot}
\includegraphics[width=\textwidth]{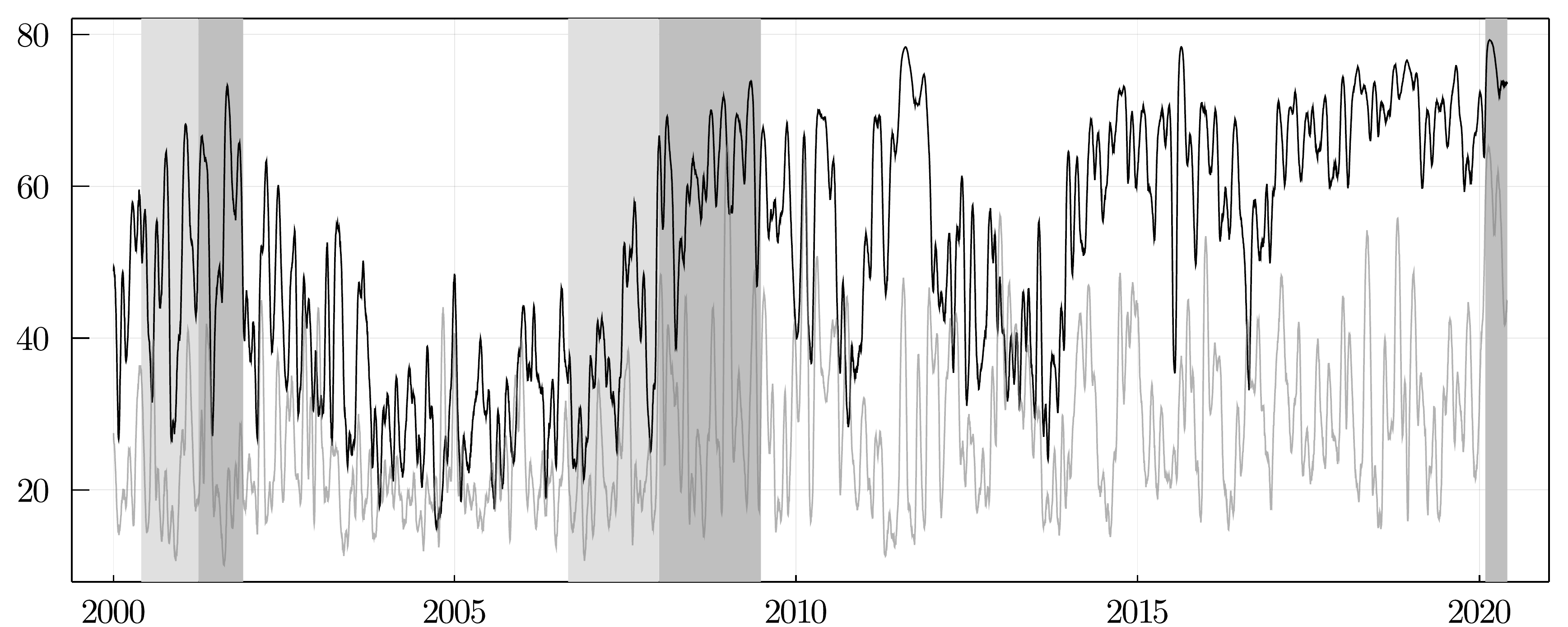}
\caption*{\textit{Notes}: This figure shows the network connectedness measures extracted from uncertainty hubs (black line) and non-hubs (grey line) from 03-01-2000 to 29-05-2020, at a daily frequency. Inversions (light grey area) are marked between July 2000 and March 2001 and between September 2006 and December 2007. The recessions (grey area) are marked between April 2001 and November 2001, between January 2008 and June 2009, and between February 2020 until the end of the sample, while the other years are marked as expansions. }
\end{center}
\end{figure}

An uncertainty shock can affect production, employment and growth within the hub, it can also generate larger uncertainty spillovers, changes in prices, growth and production of other sectors in the system, affecting the broader economy \cite[e.g.][]{kozeniauskas2018}. As an example of a few possible mechanisms, \cite{lehn2020} state that a positive shock to an investment hub directly increases production and employment in that hub; because the shock also raises the supply of investment goods, other sectors increase employment to produce more intermediate inputs for the hub. In contrast, a shock in a non-hub has a small effect on investment supply generating smaller spillovers to the rest of the economy. Further, according to the islands' framework in  \cite{garin2018}, a shock that simultaneously affects both hubs and non-hubs (both islands) is significantly weaker compared to shocks in both islands separately, this due to a reallocative shock mechanism.
In our context, this would directly translate into an increase in uncertainty of both islands (hubs and non-hubs) with the difference that non-hubs receive shocks to uncertainty, while hubs (mainly providers of services cross-sectorally) both receive and transmit shocks to uncertainty to a much larger extent (captured by a higher \textsc{agg} statistic in our model). The uncertainty transmitted or received by hubs is found to be approximately three times higher than for non-hubs.

Given the framework and rationale that we put forward, we hypothesize that the response of business cycles to shocks should be larger for hubs than for non-hubs. Shocks to uncertainty hubs generate grater effects within the network of industries and extending their implications towards the real economy. Investors' expectations and beliefs generating shocks to uncertainty in hubs reflect value-added growth into those, future price increases or losses, productivity shocks and changes in future outcomes which can trigger economic booms or downturns to a larger extent than in non-hubs.
Therefore, the hubs-based network measure is the natural candidate to be a better leading indicator of business cycles compared to non-hubs, which is explored in the next section.

\subsection{Shocks to uncertainty in specific business cycles}

Before moving on, we briefly show a more granular classification of industries for each business cycle. We report the $\mathcal{C}_{\textsc{net}}$ and $\mathcal{C}_{\textsc{agg}}$ statistics in Table \ref{CumNetBusinesscylce} in the appendix.

The IT industry can be classified as the main uncertainty hub during both inversion and recession phases related to the dot com bubble (from 2000 to mid-2002), contributing to a spread in uncertainty within the whole system. Given the high \textsc{agg} values, CD and F are classified as main uncertainty hubs during the dot com recession. In contrast, M and U are classified as uncertainty non-hubs during this first recession. Interestingly, around 2002-2004 we observe a role of uncertainty transmitter for RE since this period corresponds to the U.S. housing market bubble. The 2007--2008 GFC was indeed related to the bursting of a real estate bubble, identified precisely by the dynamic network at the end of 2005.

We then observe a main role as \textsc{net} uncertainty transmitter for F between 2006 and 2007 reflecting the first events related to the GFC and the mortgage crisis, and especially in correspondence to the collapse of Lehman Brothers in September 2008. F shows a predominant role during the GFC, both in inversion and actual recession, contributing to spread uncertainty across the whole system and being therefore classifiable as uncertainty hub within this period. After this, the U.S. economy is characterized by a long recovery period in which the ICT industries are, by far, the main uncertainty hubs. From 2010 until 2015, we observe E taking on a role as an uncertainty hub, and this could be due to a combination of events throughout this time period,\footnote{For instance the changes in the U.S. energy policies, U.S. conflicts in the Middle East, OPEC excessive production, and U.S. oil prices collapse leading to higher oil price volatility. Moreover, in summer 2011, oil and other commodities prices have fallen, although from historically high levels, and this was reflected in the fear that the commodities boom was over, with raw materials set to drop sharply.} emphasized by the spike in uncertainty between June 2014 and February 2015 associated with the global commodity price crash and oil price drop. From 2015 until the end of our sample period, industrials, consumer discretionary, communications and IT are detected as the main uncertainty hubs within our system, whereas all the other industries either as mild receivers or as uncertainty non-hubs. Finally, in the most recent Covid-19 recession, we note that the same industries are found to be the main uncertainty hubs, while health-care, real estate and utilities as uncertainty non-hubs.

This analysis further highlights the usefulness of the time-varying parameter model to more precisely uncover timing in the role of industries in shocks to uncertainty. Business cycles, economic downturns and crises with different nature might intensify the role of a specific industry, becoming a key player in contributing to shocks to uncertainty within the network. It is then crucial to be able to identify the role of each industry across time. Moreover, differences in terms of shocks to uncertainty might play a critical role in predicting business cycles throughout the sample.

\section{Industry uncertainty networks and business cycle predictability}   \label{predicting}

We study whether the information extracted from options of large firms and aggregated into ex-ante industry uncertainty networks may contribute to the predictability of indicators of the business cycle.
We draw from previous literature on the role of aggregated sector-level or firm-level networks of \textit{microeconomic} shocks to uncertainty and their relationship with the aggregate economy and business conditions \citep[e.g.][]{gabaix2011,acemoglu2012,carvalho2013,barrot2016,atalay2017,lehn2020}.
The forward-looking aspect of our network is what we are interested in exploiting in this section to predict future business cycles in advance. Further, we also argue that network measures extracted from uncertainty hubs play a more informative predictive role compared to the ones extracted from uncertainty non-hubs.

The literature on business cycle indicators, turning points, expansions and recessions characterizations has a long history. The Business Cycle Dating Committee (BCDC) of the National Bureau of Economic Research (NBER) provides probably the largest historical record of the economic activity classification and business cycles. However, its releases are usually made with about a year's delay. It, therefore, represents a very reliable chronology of peaks and troughs throughout history rather than providing an early warning tool. The latter is what studies have been aiming to propose for decades. We study the relationship between uncertainty network connectedness and more timely indicators of business cycles, both coincident and leading indicators. We hypothesize that our network connectedness may represent an even more timely and forward-looking predictor of business cycle indicators given its ex-ante characteristics from the options market. It may therefore represent both a good predictor of coincident and leading indicators and it can be also classified as a leading monitoring tool of the business cycle itself. This would provide researchers, policy-makers and the public with an even more timely indicator than the ones already available.


\subsection{The predictability of business cycle coincident indicators}  \label{coincidentpredict}

\cite{berge2011} provides an exhaustive summary of the different measures available that provide reliable signals about the current state of the business cycle. Among those, we adopt the Chicago Fed National Activity Index (CFNAI) and the Aruoba, Diebold, and Scotti (ADS) index of business conditions \citep[see][]{aruoba2009}.\footnote{The CFNAI is a monthly index that tracks the overall economic activity and the inflationary pressure. It is computed as the first principal component of 85 series drawn from four broad categories of data all adjusted for inflation. A zero value for the monthly index has been associated with the national economy expanding at its historical trend (average) rate of growth; negative values with below-average growth; positive values with above-average growth. See also \cite{chava2020} on the adoption of the CFNAI as business cycle indicator. For more information see \url{https://www.chicagofed.org/publications/cfnai/index}. The Aruoba-Diebold-Scotti (ADS) Business Condition Index tracks real business conditions at a high frequency and it is based on economic indicators. The average value of the ADS index is zero. Progressively positive values indicate progressively better-than-average conditions, whereas progressively more negative values indicate progressively worse-than-average conditions. It is collected from: \url{https://www.philadelphiafed.org/research-and-data/real-time-center}.} Specifically, we adopt the 3-month moving average of the Chicago FED National Activity Index (CFNAI-MA3) and we aggregate the ADS indicator at a monthly frequency. We adopt these as business cycle coincident indicators and we study whether the uncertainty network can forecast them.

We also disentangle these business cycles indicators into proxies of expansions and recessions following the decomposition approach by \cite{berge2011} who proposed optimal thresholds for CFNAI and ADS equal to -0.72 and -0.80, respectively. Thus, periods of economic expansion are associated with values of the CFNAI-MA3 (ADS) above –0.72 (-0.80), whereas periods of economic contraction with values of the CFNAI-MA3 (ADS) below –0.72 (-0.80). We study whether the predictive ability of the uncertainty network connectedness varies according to different states of the business cycle. We aggregate the network connectedness measure at a monthly frequency to match the frequency of the business cycle indicators and we run the following predictive regression:
\begin{equation} \label{PredictregresstotControl}
\mathcal{Y}^{(\ell)}_{t+h} = \beta_0 + \beta_{\mathcal{C}}  \  \mathcal{C}_t +  \sum_{i=1}^{N} \beta_{X,i} \ X_{t,i}  + \epsilon_t
\end{equation}
where $\mathcal{Y}^{(\ell)}_{t+h}$ is one of the business cycle indicators  we select (or their components) with the predictive horizon $h \in {1,3,6,9,12}$ months. The $\mathcal{C}_t$ is the industry uncertainty network measure (note we drop index $H$ here for the ease of notation), $X_{t,i}$ is a set of control variables including both traditional predictors of business cycles such as oil price changes (OIL), term-spread as 10-year bond rate minus the 3-month bond rate (TS), unemployment rate (UR), \citep[see also][]{gabaix2011}, and also a potential leading indicator extracted from the financial markets, namely the changes in the CBOE VIX index being a common proxy for macro uncertainty in the U.S. (VIX), changes in the \spx\ price index (SPX), the Bloomberg Commodity price index (COMM) and the S\&P Case-Shiller Home price (CSHP).\footnote{Oil prices, 10-year and 3-month bond rates, unemployment rate and the S\&P Case-Shiller Home price index are collected from the Federal Reserve Bank of St. Louis economic database at \url{https://fred.stlouisfed.org/}; the CBOE VIX index, \spx\ price index and the Bloomberg Commodity price index are collected from Bloomberg.} Therefore, $X_{t,i}$ is indexed for $i$ up to $N = 7$, the number of control we select, with $i \in (\text{OIL, TS, UR, VIX, SPX, COMM, CSHP})$.

Table \ref{CFNAI-MA3Pre} reports the predictive results. We observe that the uncertainty network is a strong predictor of the aggregate CFNAI-MA3 indicator of the business cycle up to 12 months in advance also after taking into account the information of the selected controls. The coefficient associated with our independent predictor is negative suggesting that a tighter network of industry uncertainties would lead to a contraction in the business cycle in the future horizons. The network is therefore found to behave counter-cyclically, a finding in line with previous studies relating uncertainty measures with the business cycles \citep[e.g.][]{bloom2018}. The performance of the models measured by the adjusted-$R^2$ is found to be close to 30\% at the 1-month horizon, then it decreases at the semi-annual horizon, increasing again at longer horizons such as 9 and 12 months.

\begin{table}[h!]
 \centering
 \caption{CFNAI-MA3 Predictive Results} \label{CFNAI-MA3Pre}
 \begin{threeparttable}
 \centering
\scriptsize{
\begin{tabular}{clcccccc}
 \toprule
&&  \multicolumn{5}{c}{Panel A: CFNAI-MA3}\\
\toprule
&&  \multicolumn{1}{c}{\textit{h=1}} &  \multicolumn{1}{c}{\textit{h=3}} & \multicolumn{1}{c}{\textit{h=6}} & \multicolumn{1}{c}{\textit{h=9}} & \multicolumn{1}{c}{\textit{h=12}}\\
\toprule
$\mathcal{C}_t$ $\mid$ $X_t$  & & -0.013** & -0.024*** & -0.026*** & -0.040*** & -0.028*** \\
  & &(0.006) & (0.007) & (0.007) & (0.007) & (0.007) \\
Adj. $R^{2}$ & & \multicolumn{1}{c}{0.294} & \multicolumn{1}{c}{0.156} & \multicolumn{1}{c}{0.067} & \multicolumn{1}{c}{0.176} & \multicolumn{1}{c}{0.168}\\
Obs &  & \multicolumn{1}{c}{243} & \multicolumn{1}{c}{241} & \multicolumn{1}{c}{238} & \multicolumn{1}{c}{235} & \multicolumn{1}{c}{232} \\
 \toprule
&&  \multicolumn{5}{c}{Panel B: CFNAI-MA3 Expansion}\\
\toprule
 &&  \multicolumn{1}{c}{\textit{h=1}} &  \multicolumn{1}{c}{\textit{h=3}} & \multicolumn{1}{c}{\textit{h=6}} & \multicolumn{1}{c}{\textit{h=9}} & \multicolumn{1}{c}{\textit{h=12}}\\
\toprule
 $\mathcal{C}_t$ $\mid$ $X_t$   && -0.007*** & -0.010*** & -0.012*** & -0.011*** & -0.008*** \\
  & & (0.002) & (0.002) & (0.002) & (0.002) & (0.002) \\
Adj. $R^{2}$ & & \multicolumn{1}{c}{0.093} & \multicolumn{1}{c}{0.175} & \multicolumn{1}{c}{0.268} & \multicolumn{1}{c}{0.304} & \multicolumn{1}{c}{0.239} \\
Obs &  & \multicolumn{1}{c}{243} & \multicolumn{1}{c}{241} & \multicolumn{1}{c}{238} & \multicolumn{1}{c}{235} & \multicolumn{1}{c}{232} \\
\toprule
&&  \multicolumn{5}{c}{Panel C: CFNAI-MA3 Recession}\\
\toprule
 &&  \multicolumn{1}{c}{\textit{h=1}} &  \multicolumn{1}{c}{\textit{h=3}} & \multicolumn{1}{c}{\textit{h=6}} & \multicolumn{1}{c}{\textit{h=9}} & \multicolumn{1}{c}{\textit{h=12}}\\
\toprule
 $\mathcal{C}_t$ $\mid$ $X_t$  & & -0.006 & -0.014** & -0.013** & -0.029*** & -0.020*** \\
  & & (0.006) & (0.007) & (0.007) & (0.007) & (0.007) \\
Adj. $R^{2}$ & &   \multicolumn{1}{c}{0.287} & \multicolumn{1}{c}{0.135} & \multicolumn{1}{c}{0.059} & \multicolumn{1}{c}{0.144} & \multicolumn{1}{c}{0.148} \\
Obs &  & \multicolumn{1}{c}{243} & \multicolumn{1}{c}{241} & \multicolumn{1}{c}{238} & \multicolumn{1}{c}{235} & \multicolumn{1}{c}{232} \\
\bottomrule
\end{tabular}
\begin{tablenotes}
\item {\scriptsize \textit{Notes}: This table presents the results of the predictive regression in equation \ref{PredictregresstotControl} between the industry uncertainty network connectedness and the 3-month moving average of the Chicago FED National Activity Index (CFNAI-MA3), indicator of business cycle (Panel A). In Panel B and Panel C the results of the predictive regression with respect to the CFNAI-MA3 expansion and recession indicators are reported, respectively. We also add a set of controls, $X$. The five columns of the table represent different predictability horizons with $h \in (1,3,6,9,12)$. Regressions' coefficients and standard errors (in parentheses), and adjusted-$R^2$ are reported. Coefficients are marked with *, **, *** for 10\%, 5\%, 1\% significance levels, respectively. Intercept and controls results are not reported for the sake of space. Series are considered at a monthly frequency between 01-2000 and 05-2020.}
\end{tablenotes}
}
\end{threeparttable}
\end{table}

When we look at the disentangled components of the business cycle indicator, we observe that uncertainty network can predict well future expansion periods up to one year in advance. The sign associated with the models' coefficients is, again, negative therefore suggesting that expansion periods might contract when the network is tighter. We notice a greater adjusted-$R^2$ performance of the model for the 6- to 12-month horizons. Regarding U.S. recession periods, we observe weaker predictability of network across shorter horizons. However, the uncertainty network is still found to predict well future recessions from 3-month up to the 12-month horizon, being the higher adjusted-$R^2$ placed again on the longer horizon. Interestingly, we find a negative sign associated with the coefficients, this implying that increasing levels of network connectedness will expand the business cycle when in recession (in this case the dependent variable is below the -0.72 threshold).\footnote{The Chicago Fed suggested -0.7 to be a more accurate threshold of turning points for the CFNAI indicator. We repeat the empirical analysis of this section by adopting this threshold. The results are found to be materially the same.}

We validate the predictive ability of uncertainty network connectedness by showing how it can also similarly predict a different business cycle coincident indicator. We show the results for the ADS Index in Table \ref{ADSPre} in the appendix. The aggregate uncertainty network measure shows predictability power for the ADS index from 3 months up to 12 months in advance, with again negative coefficients and stronger performance at the long horizon. When we look at the expansion or recession indicators, a stronger predictive ability is still placed at longer horizons, especially for the expansion indicator.

Another possible business cycle coincident indicator is the industrial production (IP) growth rate. As a robustness check, we repeat the same exercise with the IP annualized growth rate, still confirming the predictive ability of the uncertainty network. We also adopt another business cycle coincident indicator collected from the Economic Cycle Research Institute (ECRI), the U.S. coincident indicator (U.S.CI).\footnote{For more information and data see https://www.businesscycle.com/ecri-reports-indexes/all-indexes.} We take the growth rate of the indicator and show that this leads to similar results, relegated to the appendix in Table \ref{IPU.S.CIPre}. As an additional robustness check, we also replace the uncertainty network connectedness measure constructed with time-varying networks with a network measure constructed by following previous studies \citep[e.g.][]{Diebold2012,Diebold2014} using a moving window. We find that the latter is unable to predict future business cycles and the expansion and recession components, highlighting, even more, the importance of precisely characterizing the network at any point in time without relying on moving windows when it comes to predicting future levels of the business cycle or the real economy.\footnote{The all set of results is available from the authors upon request.}

\subsection{Industry uncertainty networks and leading indicators}  \label{leadingpredict}

Due to its forward-looking nature, we argue that the uncertainty network connectedness could also potentially serve as a good predictor of business cycle leading indicators, and be considered a leading indicator itself. We here check the predictive ability of this index for two business cycle leading indicators: the U.S. composite leading indicator (CLI) by the OECD.\footnote{The composite leading indicator is collected from the OECD data base at \url{https://data.oecd.org/leadind/composite-leading-indicator-cli.htm}. CLI provides early signals of turning points in business cycles showing fluctuations of the economic activity around its long term potential level.} and the U.S. leading indicator (U.S.LI) computed by the Economic Cycle Research Institute (ECRI).
U.S.LI is available at a weekly frequency and aggregated here at a monthly frequency, and we take the growth rate of the indicator. We find similar results, confirming both significance and coefficients signs, even after adding the set of controls. We repeat the same predictive exercise of the previous subsection, by running equation \ref{PredictregresstotControl} where now the dependent variable is CLI. We report the results in Table \ref{CLIPre}. We observe that the predictability of uncertainty network connectedness is even stronger for the business cycle leading indicators, spanning from 3-month up to one year and from 1-month up to 9-month horizons, for CLI and U.S.LI, respectively. The coefficients are still found to be negative confirming our previous findings.

\begin{table}[h!]
 \centering
 \caption{Leading Indicators Predictive Results} \label{CLIPre}
 \begin{threeparttable}
 \centering
\scriptsize{
\begin{tabular}{clcccccc}
 \toprule
&&  \multicolumn{5}{c}{Panel A:  CLI}\\
\toprule
  &&  \multicolumn{1}{c}{\textit{h=1}} &  \multicolumn{1}{c}{\textit{h=3}} & \multicolumn{1}{c}{\textit{h=6}} & \multicolumn{1}{c}{\textit{h=9}} & \multicolumn{1}{c}{\textit{h=12}}\\
\toprule
$\mathcal{C}_t$ $\mid$ $X_t$  & & -0.011 & -0.030*** & -0.051*** & -0.079*** & -0.084*** \\
  & & (0.010) & (0.011) & (0.012) & (0.011) & (0.011) \\
Adj. $R^{2}$ & & \multicolumn{1}{c}{0.435} & \multicolumn{1}{c}{0.294} & \multicolumn{1}{c}{0.192} & \multicolumn{1}{c}{0.247} & \multicolumn{1}{c}{0.262} \\
Obs &  & \multicolumn{1}{c}{243} & \multicolumn{1}{c}{241} & \multicolumn{1}{c}{238} & \multicolumn{1}{c}{235} & \multicolumn{1}{c}{232} \\
\toprule
&&  \multicolumn{5}{c}{Panel B:  U.S.LI}\\
\toprule
  &&  \multicolumn{1}{c}{\textit{h=1}} &  \multicolumn{1}{c}{\textit{h=3}} & \multicolumn{1}{c}{\textit{h=6}} & \multicolumn{1}{c}{\textit{h=9}} & \multicolumn{1}{c}{\textit{h=12}}\\
  \toprule
 $\mathcal{C}_t$ $\mid$ $X_t$ & &  -0.348*** & -0.371*** & -0.417*** & -0.372*** & -0.108 \\
  & & (0.063) & (0.066) & (0.066) & (0.067) & (0.073) \\
Adj. $R^{2}$ & & \multicolumn{1}{c}{0.364} & \multicolumn{1}{c}{0.304} & \multicolumn{1}{c}{0.303} & \multicolumn{1}{c}{0.299} & \multicolumn{1}{c}{0.178} \\
Obs &  & \multicolumn{1}{c}{243} & \multicolumn{1}{c}{241} & \multicolumn{1}{c}{238} & \multicolumn{1}{c}{235} & \multicolumn{1}{c}{232} \\
\bottomrule
\end{tabular}
\begin{tablenotes}
\item {\scriptsize \textit{Notes}: This table presents the results of the predictive regression in equation \ref{PredictregresstotControl} between the industry uncertainty network connectedness and two leading indicators of business cycle, namely CLI and U.S.LI, in Panel A and B, respectively. We also add a set of controls, $X$. The five columns of the table represent different predictability horizons with $h \in (1,3,6,9,12)$. Regressions' coefficients and standard errors (in parentheses), and adjusted-$R^2$ are reported. Coefficients are marked with *, **, *** for 10\%, 5\%, 1\% significance levels, respectively. Intercept and controls results are not reported for the sake of space. Series are considered at a monthly frequency between 01-2000 and 05-2020. }
\end{tablenotes}
}
\end{threeparttable}
\end{table}

Overall, it appears that the uncertainty network can anticipate what is commonly viewed as a business cycle leading indicator. This opens up some interesting considerations. Given that the uncertainty network is extracted from options prices, it is expected that the newly proposed uncertainty network contains forward-looking information that can be useful as ex-ante business cycle monitoring indicator. We know that a business cycle leading indicator should ideally anticipate and predict coincident indicators. We illustrate in the previous section that the uncertainty network shares such properties. In this subsection, we also show how our network measure is a good predictor of leading indicators, such a finding emphasizing even further the usefulness of its forward-looking information content.

\begin{table}[h!]
 \centering
 \caption{Coincident Indicators Predictive Results} \label{CFNAIPreCLIcontrol}
 \begin{threeparttable}
 \centering
\scriptsize{
\begin{tabular}{clcccccc}
\toprule
&&  \multicolumn{5}{c}{Panel A:  CFNAI-3M}\\
\toprule
&&  \multicolumn{1}{c}{\textit{h=1}} &  \multicolumn{1}{c}{\textit{h=3}} & \multicolumn{1}{c}{\textit{h=6}} & \multicolumn{1}{c}{\textit{h=9}} & \multicolumn{1}{c}{\textit{h=12}}\\
\toprule
$\mathcal{C}_t$ $\mid$ $X_t$  & & -0.011** & -0.023*** & -0.025*** & -0.040*** & -0.028*** \\
  & & (0.005) & (0.006) & (0.007) & (0.007) & (0.007) \\
CLI & & 0.432*** & 0.357*** & 0.265*** & 0.136*** & 0.008 \\
  & & (0.036) & (0.045) & (0.051) & (0.051) & (0.053) \\
Adj. $R^{2}$ & & \multicolumn{1}{c}{0.560} & \multicolumn{1}{c}{0.333} & \multicolumn{1}{c}{0.162} & \multicolumn{1}{c}{0.198} & \multicolumn{1}{c}{0.164} \\
Obs &  & \multicolumn{1}{c}{243} & \multicolumn{1}{c}{241} & \multicolumn{1}{c}{238} & \multicolumn{1}{c}{235} & \multicolumn{1}{c}{232} \\
\toprule
&&  \multicolumn{5}{c}{Panel B: ADS}\\
\toprule
&&  \multicolumn{1}{c}{\textit{h=1}} &  \multicolumn{1}{c}{\textit{h=3}} & \multicolumn{1}{c}{\textit{h=6}} & \multicolumn{1}{c}{\textit{h=9}} & \multicolumn{1}{c}{\textit{h=12}}\\
\toprule
$\mathcal{C}_t$ $\mid$ $X_t$  & &  -0.018 & -0.026* & -0.034* & -0.053*** & -0.038** \\
 &  & (0.015) & (0.018) & (0.018) & (0.018) & (0.019) \\
CLI & & 0.590*** & 0.587*** & 0.425*** & 0.305** & 0.159 \\
  & & (0.109) & (0.130) & (0.134) & (0.135) & (0.138) \\
Adj. $R^{2}$ & & \multicolumn{1}{c}{0.351} & \multicolumn{1}{c}{0.105} & \multicolumn{1}{c}{0.066} & \multicolumn{1}{c}{0.080} & \multicolumn{1}{c}{0.074} \\
Obs &  & \multicolumn{1}{c}{243} & \multicolumn{1}{c}{241} & \multicolumn{1}{c}{238} & \multicolumn{1}{c}{235} & \multicolumn{1}{c}{232} \\
\bottomrule
\end{tabular}
\begin{tablenotes}
\item {\scriptsize \textit{Notes}: This table presents the results of the predictive regressions between the industry uncertainty network connectedness, and the coincident indicators of business cycle, namely CFNAI and ADS. We present results for regression equation \ref{PredictregresstotControl} in which we add a set of controls including also the leading indicator, CLI. The five columns of the table represent different predictability horizons with $h \in (1,3,6,9,12)$. Regressions' coefficients and standard errors (in parentheses), and adjusted-$R^2$ are reported. Coefficients are marked with *, **, *** for 10\%, 5\%, 1\% significance levels, respectively. Intercept and controls results are not reported for the sake of space, the only exception being the CLI control. Series are considered at a monthly frequency between 01-2000 and 05-2020.  }
\end{tablenotes}
}
\end{threeparttable}
\end{table}

To validate this point, we test whether the existing business cycle leading indicators might contain a different set of information, mainly at shorter horizons, compared to our uncertainty network. To this end, we test whether our measure can predict coincident indicators even after controlling for a leading indicator (CLI). The results are reported in Table \ref{CFNAIPreCLIcontrol}. We find that the uncertainty network predictability holds at every horizon, even after controlling for CLI. The latter shows a good predictive ability, however up to the 9-month horizon (in line with the index characteristics description). The uncertainty network clearly shows characteristics of a complementary (and rather superior) business cycle leading indicator, spanning predictive power from 1- to the 12-month horizon in advance, even after controlling for CLI. For the ADS, as a proxy for a business cycle coincident indicator, we find a weaker predictive power for the uncertainty network at the short horizon, however still confirming the anticipatory property of about one quarter. We repeat the same exercise of Table \ref{CFNAIPreCLIcontrol} by adopting the U.S.LI indicator by the ECRI as control. We obtain similar findings for both CFNAI and ADS and results are relegated to the paper appendix in Table \ref{CFNAIPreU.S.CIcontrol}. Overall, the uncertainty network adds quite a lot in terms of long-horizon predictability compared to the information content of other leading indicators of business cycles.

\subsection{Hubs and non-hubs industry connectedness networks}  \label{hubspredictive}


In this subsection, we check whether the predictability power of uncertainty hubs-based networks may differ from uncertainty non-hubs. We repeat the empirical analysis of the previous section, now considering only hubs and non-hubs based networks, $\mathcal{C}_t^{\text{hub}}$ and $\mathcal{C}_t^{\text{non-hub}}$, respectively. We hypothesize that the former leads to greater predictability since reflecting information from the industries detected to be the main uncertainty contributors within the system. The uncertainty hubs network is based on the CD, CM, E, IN and IT industries, while the non-hubs network on F, M, RE and U industries, as illustrated in section \ref{totnetdynamic} and in Figure \ref{hubsplot}. Similarly to equation \ref{PredictregresstotControl}, we estimate the following:
\begin{equation} \label{Predicthubsmulti}
\mathcal{Y}^{(\ell)}_{t+h} = \beta_0 + \beta_{\text{hub}} \ \mathcal{C}_t^{\text{hub}} + \beta_{\text{non-hub}} \ \mathcal{C}_t^{\text{non-hub}}  + \sum_{i=1}^{N} \beta_{X,i} \ X_{t,i} + \epsilon_t
\end{equation}
where we add the independent variables that characterize uncertainty hubs and non-hubs based on network connectedness taken jointly and aggregated at a monthly frequency to match the frequency of the indicators we adopt. We include the same set of controls $X$.


In Table \ref{CFNAIHubsnoHubs} we observe that the predictability of the hubs network is superior compared to the non-hubs for the aggregate CFNAI-MA3 and recessions especially for longer horizons, while for expansions at any horizons. We notice how the result achieved by looking at the predictive ability of uncertainty network resemble, or appear even stronger than, the results obtained in the previous section when looking at the aggregated predictability. In Table \ref{CFNAI-MA3Hubs} in the appendix we further confirm this finding by showing the results of the $\mathcal{C}_t^{\text{hub}}$ predictor alone, controlling for $X$. The predictive ability of $\mathcal{C}_t^{\text{hub}}$ is found to be stronger than the one achieved by the aggregate network $\mathcal{C}$. This finding is reflected by the higher adjusted-$R^2$ detected, most of the time, in Table \ref{CFNAI-MA3Hubs} compared to Table \ref{CFNAI-MA3Pre}. These results suggest that the predictive ability of the uncertainty network appear to be driven by a few uncertainty hubs.

\begin{table}[h!]
 \centering
 \caption{Hubs vs. non-Hubs Network Predictive Results} \label{CFNAIHubsnoHubs}
 \begin{threeparttable}
 \centering
\scriptsize{
\begin{tabular}{clcccccc}
 \toprule
&&  \multicolumn{5}{c}{Panel A: CFNAI-MA3}\\
\toprule
   &&  \multicolumn{1}{c}{\textit{h=1}} &  \multicolumn{1}{c}{\textit{h=3}} & \multicolumn{1}{c}{\textit{h=6}} & \multicolumn{1}{c}{\textit{h=9}} & \multicolumn{1}{c}{\textit{h=12}}\\
 \toprule
 $\mathcal{C}_t^{\text{hub}}$ $\mid$ $X_t$ & & -0.007* & -0.013*** & -0.017*** & -0.021*** & -0.019*** \\
  & &(0.004) & (0.004) & (0.004) & (0.004) & (0.004) \\
   $\mathcal{C}_t^{\text{non-hub}}$ $\mid$ $X_t$   &&  -0.015*** & -0.021*** & -0.013* & -0.010 & 0.002 \\
  & &(0.006) & (0.006) & (0.007) & (0.007) & (0.007) \\
 Adj. $R^{2}$ &  &  \multicolumn{1}{c}{0.319} & \multicolumn{1}{c}{0.214} & \multicolumn{1}{c}{0.115} & \multicolumn{1}{c}{0.183} & \multicolumn{1}{c}{0.183} \\
 Obs & & \multicolumn{1}{c}{243} & \multicolumn{1}{c}{241} & \multicolumn{1}{c}{238} & \multicolumn{1}{c}{235} & \multicolumn{1}{c}{232} \\
  \toprule
&&  \multicolumn{5}{c}{Panel B: CFNAI-MA3 Expansion}\\
\toprule
   &&  \multicolumn{1}{c}{\textit{h=1}} &  \multicolumn{1}{c}{\textit{h=3}} & \multicolumn{1}{c}{\textit{h=6}} & \multicolumn{1}{c}{\textit{h=9}} & \multicolumn{1}{c}{\textit{h=12}}\\
 \toprule
 $\mathcal{C}_t^{\text{hub}}$ $\mid$ $X_t$ & & -0.004*** & -0.005*** & -0.005*** & -0.006*** & -0.006*** \\
  & & (0.001) & (0.001) & (0.001) & (0.001) & (0.001) \\
  $\mathcal{C}_t^{\text{non-hub}}$ $\mid$ $X_t$ & & 0.001 & -0.001 & -0.005*** & -0.001 & 0.0004 \\
  & & (0.002) & (0.002) & (0.002) & (0.002) & (0.002) \\
 Adj. $R^{2}$ &  &  \multicolumn{1}{c}{0.097} & \multicolumn{1}{c}{0.150} & \multicolumn{1}{c}{0.258} & \multicolumn{1}{c}{0.289} & \multicolumn{1}{c}{0.286} \\
 Obs & & \multicolumn{1}{c}{243} & \multicolumn{1}{c}{241} & \multicolumn{1}{c}{238} & \multicolumn{1}{c}{235} & \multicolumn{1}{c}{232} \\
 \toprule
&&  \multicolumn{5}{c}{Panel C: CFNAI-MA3 Recession}\\
\toprule
   &&  \multicolumn{1}{c}{\textit{h=1}} &  \multicolumn{1}{c}{\textit{h=3}} & \multicolumn{1}{c}{\textit{h=6}} & \multicolumn{1}{c}{\textit{h=9}} & \multicolumn{1}{c}{\textit{h=12}}\\
 \toprule
 $\mathcal{C}_t^{\text{hub}}$ $\mid$ $X_t$ & & -0.003 & -0.008** & -0.012*** & -0.015*** & -0.012*** \\
  & & (0.004) & (0.004)  &  (0.004) & (0.004) & (0.004) \\
  $\mathcal{C}_t^{\text{non-hub}}$ $\mid$ $X_t$ & & -0.016*** & -0.020*** & -0.009 & -0.009 & 0.002 \\
  &  & (0.005) & (0.006) & (0.007) & (0.006) & (0.006) \\
 Adj. $R^{2}$ &  &   \multicolumn{1}{c}{0.313} & \multicolumn{1}{c}{0.190} & \multicolumn{1}{c}{0.090} & \multicolumn{1}{c}{0.152} & \multicolumn{1}{c}{0.149} \\
 Obs & & \multicolumn{1}{c}{243} & \multicolumn{1}{c}{241} & \multicolumn{1}{c}{238} & \multicolumn{1}{c}{235} & \multicolumn{1}{c}{232} \\
 \bottomrule
\end{tabular}
\begin{tablenotes}
\item {\scriptsize \textit{Notes}: This table presents the results of the predictive regression \ref{Predicthubsmulti} comparing the predictive ability of the uncertainty hubs vs non-hubs sub-networks with respect to the 3-month moving average of the Chicago FED National Activity Index (CFNAI-MA3) in Panel A. In Panel B and Panel C the results of the predictive regression with respect to the CFNAI expansion and recession periods are reported, respectively. The five columns of the table represent different predictability horizons with $h \in (1,3,6,9,12)$. Regressions' coefficients and standard errors (in parentheses), and adjusted-$R^2$ are reported. Coefficients are marked with *, **, *** for 10\%, 5\%, 1\% significance levels, respectively. Intercept and controls results are not reported for the sake of space. Series are considered at a monthly frequency between 01-2000 and 05-2020.    }
\end{tablenotes}
}
\end{threeparttable}
\end{table}

We then check the relationship between the hubs and non-hubs networks for leading indicators. The predictive results for CLI are reported in Table \ref{CLIHubsnoHubs}. We find that the hub network connectedness strongly predicts CLI up to one year, whereas the predictive power of the non-hubs network is overall absent. Thus, the predictive power of the uncertainty hubs network is found to be strong also for business cycles leading indicators, confirming a clear superior predictive ability compared to non-hubs.

We further validate the predictive ability of the hubs-based uncertainty network by including the leading indicator CLI as a control variable in the multivariate regression when predicting CFNAI-MA3. The hubs network shows strong predictive ability from 3-month up to one year, complementing the shorter horizon predictive ability of CLI by expanding it to longer horizons. The non-hub network shows good predictive power in the short horizon and it, therefore, appears not to contain additional information compared to other leading indicators of business cycles e.g. CLI. The hubs-based network shows a longer horizon predictive power, a useful feature for any business cycle leading indicators. Our results suggest that the hubs-based network may be considered as the main driver of the aggregate network, achieving even stronger predictive power on its own.

\begin{table}[h!]
 \centering
 \caption{Hubs vs. non-Hubs Network Predictive Results} \label{CLIHubsnoHubs}
 \begin{threeparttable}
 \centering
\scriptsize{
\begin{tabular}{clcccccc}
 \toprule
&&  \multicolumn{5}{c}{Panel A: Leading Indicator CLI}\\
\toprule
   &&  \multicolumn{1}{c}{\textit{h=1}} &  \multicolumn{1}{c}{\textit{h=3}} & \multicolumn{1}{c}{\textit{h=6}} & \multicolumn{1}{c}{\textit{h=9}} & \multicolumn{1}{c}{\textit{h=12}}\\
\toprule
 $\mathcal{C}_t^{\text{hub}}$ $\mid$ $X_t$ & & -0.012** & -0.024*** & -0.039*** & -0.051*** & -0.056*** \\
  & &(0.005) & (0.006) & (0.006) & (0.006) & (0.006) \\
  $\mathcal{C}_t^{\text{non-hub}}$ $\mid$ $X_t$   &&  -0.009 & -0.015 & -0.016 & -0.017* & -0.005 \\
  & &(0.008) & (0.009) & (0.010) & (0.010) & (0.010) \\
 Adj. $R^{2}$ &  & \multicolumn{1}{c}{0.447} & \multicolumn{1}{c}{0.335} & \multicolumn{1}{c}{0.269} & \multicolumn{1}{c}{0.325} & \multicolumn{1}{c}{0.339} \\
 Obs & & \multicolumn{1}{c}{243} & \multicolumn{1}{c}{241} & \multicolumn{1}{c}{238} & \multicolumn{1}{c}{235} & \multicolumn{1}{c}{232} \\
 \toprule
&&  \multicolumn{5}{c}{Panel B: CFNAI-MA3 controlling for CLI}\\
\toprule
   &&  \multicolumn{1}{c}{\textit{h=1}} &  \multicolumn{1}{c}{\textit{h=3}} & \multicolumn{1}{c}{\textit{h=6}} & \multicolumn{1}{c}{\textit{h=9}} & \multicolumn{1}{c}{\textit{h=12}}\\
 \toprule
 $\mathcal{C}_t^{\text{hub}}$ $\mid$ $X_t$ & & -0.003 & -0.009*** & -0.015*** & -0.020*** & -0.019*** \\
  & & (0.003) & (0.004) & (0.004) & (0.004) & (0.004) \\
  $\mathcal{C}_t^{\text{non-hub}}$ $\mid$ $X_t$   &&  -0.014*** & -0.021*** & -0.012* & -0.011 & 0.002 \\
  & &(0.004) & (0.005) & (0.007) & (0.007) & (0.007) \\
 CLI &&  0.424*** & 0.340*** & 0.243*** & 0.109** & -0.017 \\
  & &(0.036) & (0.044) & (0.051) & (0.051) & (0.053) \\
 Adj. $R^{2}$ &  & \multicolumn{1}{c}{0.573} & \multicolumn{1}{c}{0.371} & \multicolumn{1}{c}{0.193} & \multicolumn{1}{c}{0.196} & \multicolumn{1}{c}{0.179} \\
 Obs & & \multicolumn{1}{c}{243} & \multicolumn{1}{c}{241} & \multicolumn{1}{c}{238} & \multicolumn{1}{c}{235} & \multicolumn{1}{c}{232} \\
\bottomrule
\end{tabular}
\begin{tablenotes}
\item {\scriptsize \textit{Notes}: This table presents the results of the predictive regression \ref{Predicthubsmulti} comparing the predictive ability of the uncertainty hubs vs non-hubs sub-networks with respect to the leading indicator, CLI (Panel A). In Panel B, the results with respect to the CFNAI-MA3 coincident indicator controlling also for the leading indicator, CLI, are reported. The five columns of the table represent different predictability horizons with $h \in (1,3,6,9,12)$. Regressions' coefficients and standard errors (in parentheses), and adjusted-$R^2$ are reported. Coefficients are marked with *, **, *** for 10\%, 5\%, 1\% significance levels, respectively. Intercept and controls results are not reported for the sake of space, the only exception being the CLI control. Series are considered at a monthly frequency between 01-2000 and 05-2020.    }
\end{tablenotes}
}
\end{threeparttable}
\end{table}

As a further robustness check, we also repeat the same predictive exercises by adopting a stricter construction of the hubs and non-hubs networks, including only CM, IN and IT industries and M, RE and U in the networks respectively. We report the results in Table \ref{CFNAIHubsnoHubsSTRICT} in the appendix for all the coincident indicator CFNAI-MA3, expansions and recessions, and also for the leading indicator CLI. We corroborate our previous findings and confirm our hypothesis since the hubs network is found to be more informative in predicting well the future business cycle indicators.

\subsection{Predicting the volatility of GDP}  \label{networkGDP}

Finally, inspired by \cite{carvalho2013}, in this section we check whether or not our uncertainty network connectedness is also able to predict future U.S. GDP growth rate and volatility. We calculate the growth rate of $GDP_t$, the U.S. GDP at time $t$ as \math g_{t} = log(GDP_{t+1}/GDP_{t}) \quad \endmath where $t$ is expressed in quarterly frequency, end of the quarter.
The volatility of GDP growth is measured as the annualized GDP standard deviation over 4 quarters. We check whether or not the aggregate network can predict U.S. GDP indicators in the next $h$ quarters ahead, with $h \in (1,2,3,4)$ by running the predictive equation \ref{PredictregresstotControl} at a quarterly frequency.

We report the empirical results in Table \ref{GDPGRPre} in the appendix. We observe that the uncertainty network cannot predict the future GDP growth rate 2 and 3 quarters ahead. An intensification of connections leads to a decreasing GDP growth rate in the following quarters. We then show how the information embedded in our measures is also useful to predict future GDP volatility. By repeating the same exercise, we also show that the network predicts future GDP volatility in the next four quarters, the results being even stronger and found to be significant up to one year. An increase in connectedness leads to an increase in GDP volatility, thus confirming the counter-cyclicality of the uncertainty network.

In Table \ref{GDPGRPreHUBS} in the appendix, we show the predictive results of hubs and non-hubs based uncertainty networks for the GDP growth rate and volatility. Also, in this case, we find a stronger predictive ability for the hubs network compared to non-hubs for both the GDP growth rate and GDP volatility. We confirm the asymmetric predictive ability of the uncertainty network in favour of the hubs network, while weak and almost absent predictive power for the non-hubs network. We also confirm stronger results for the hubs network compared to the aggregate network results of the previous subsection, emphasizing once more how a network extracted solely from uncertainty hubs might have even strong predictive power not only for the business cycle indicators, but also for the volatility of GDP.

\section{Conclusion}  \label{conclusion}

We studied the ex-ante uncertainty network of the U.S. industries constructed from options-based investors future expectations about the future month uncertainty. We relied on a novel data set of industry forward-looking uncertainties and we adopted a time-varying parameter VAR (TVP-VAR) to model the ex-ante uncertainty network of industries.

We were able to obtain a precise point in time estimation of the uncertainty network to accurately characterize the specific industry role in shocks to uncertainty, dynamically over the business cycle.
We uncovered a main role for booming industries such as communications and information technology and we classified these as uncertainty hubs. Industries such as, financial (having an important influence mainly limited to the global financial crisis), real estate, materials and utilities showed a more neutral role and are classified as uncertainty non-hubs.

We exploited the forward-looking industry connectedness networks characteristics in predictability. We found the industry uncertainty network to be a useful tool to predict future business cycles. We identified a greater predictive ability for the network extracted from uncertainty hubs, being the main (leading) indicator of business cycles. Such uncertainty networks can serve as a new tool for regulators and policymakers to monitor the relationship between industry networks, the business cycle and the real economy in a precise, timely and forward-looking manner.

Fluctuations and shocks to uncertainty in uncertainty hubs should be more carefully monitored due to their potential for shaping the industry networks and impacting the real economy. Our findings suggest a possible direction for policies and government interventions. Policy actions should target uncertainty hubs since they are the stronger contributors to uncertainty shocks and they show a tighter link with the real economy.
Policy interventions aimed at dampening their shocks to uncertainty can potentially provide a direct channel for boosting business cycles.



\clearpage
\newpage
\singlespacing
\bibliography{Industry_Vol_Connect}
\bibliographystyle{chicago}

\newpage

\begin{appendices}

\section*{Appendix}
\label{app:sec00}

\setcounter{table}{0}
\setcounter{table}{0}
\renewcommand{\thetable}{A\arabic{table}}

\renewcommand{\thefigure}{A\arabic{figure}}

\setcounter{figure}{0}

\section{S\&P 500 sectors breakdown} \label{breakdownindustries}

In this short section we present a U.S. stock market sectors breakdown and description where the S\&P 500 index is used as a proxy for the stock market. The information in this section are reported as of January 25, 2019. For more details and updated information, see also \url{https://us.spindices.com/indices/equity/sp-500}.

\begin{itemize}
  \item Consumer Discretionary (CD): The CD sector consists of businesses that have demand that rises and falls based on general economic conditions such as washers and dryers, sporting goods, new cars, etc. At present, the consumer discretionary sector contains 11 sub-industries: Automobile Components Industry, Automobiles Industry, Distributors Industry, Diversified Consumer Services Industry, Hotels, Restaurants \& Leisure Industry, Household Durables Industry, Leisure Products Industry, Multiline Retail Industry, Specialty Retail Industry, Textile, Apparel \& Luxury Goods Industry, Internet \& Direct Marketing. The total value of all consumer discretionary stocks in the U.S. came to \$4.54 trillion, or about 10.11\% of the market. Examples of consumer discretionary stocks include Amazon and Starbucks.
  \item Communication Services (CM): from telephone access to high-speed internet, the communication services sector of the economy keeps us all connected. At present, the communication services sector is made up of five industries: Diversified Telecommunication Services, Wireless Telecommunication Services, Entertainment Media, Interactive Media and Services. the total value of all communication services stocks in the U.S. came to \$4.42 trillion, or 10.33\% of the market. The CM industry includes stocks such as AT\&T and Verizon, but also the giants Alphabet Inc A and Facebook from 2004 and 2012, respectively. In fact, the new communication sector of the S\&P 500 includes now big companies such as Facebook and Alphabet Google since these were moved out from the technology and consumer discretionary sectors, respectively, due to the changes of the Global Industry Classification Standard (GICS).
  \item Consumer Staples (CS): The CS sector consists of businesses that sell the necessities of life, ranging from bleach and laundry detergent to toothpaste and packaged food. At present, the consumer staples sector contains six industries:  Beverages Industry, Food \& Staples Retailing Industry, Food Products Industry, Household Products Industry, Personal Products Industry, Tobacco Industry. The total value of all consumer staples stocks in the U.S. came to \$2.95 trillion, or about 7.18\% of the market and includes companies such as Procter \& Gamble.
  \item Energy (E): The E sector consists of businesses that source, drill, extract, and refine the raw commodities we need to keep the country going, such as oil and gas. At present, the energy sector contains two industries: Energy Equipment \& Services Industry, and Oil, Gas \& Consumable Fuels Industry. The total value of all energy stocks in the U.S. came to \$3.36 trillion, or about 5.51\% of the market. Major energy stocks include Exxon Mobil and Chevron.
  \item Financial (F): The F sector consists of banks, insurance companies, real estate investment trusts, credit card issuers. At present, the financial sector contains seven industries: Banking Industry, Capital Markets Industry, Diversified Financial Services Industry, Insurance Industry, Mortgage Real Estate Investment Trusts (REITs) Industry, Thrifts \& Mortgage Finance Industry. The total value of all financial stocks in the U.S. came to \$6.89 trillion, or about 13.63\% of the market. J.P. MorganChase, GoldmanSachs, and Bank of America are examples of financial stocks.
  \item Health Care (HC): The HC sector consists of drug companies, medical supply companies, and other scientific-based operations that are concerned with improving and healing human life. At present, the HC sector contains six industries: Biotechnology Industry, Health Care Equipment \& Supplies Industry, Health Care Providers \& Services Industry, Health Care Technology Industry, Life Sciences Tools \& Services Industry, Pharmaceuticals Industry. The total value of all health care stocks in the U.S. came to \$5.25 trillion, or about 15.21\% of the market. Examples of HC stocks include Johnson \& Johnson, and Pfizer.
  \item Industrials (IN): The IN sector comprises railroads and airlines to military weapons and industrial conglomerates. At present, the industrial sector contains fourteen industries: Aerospace \& Defense Industry, Air Freight \& Logistics Industry, Airlines Industry, Building Products Industry, Commercial Services \& Supplies Industry, Construction \& Engineering Industry, Electrical Equipment Industry, Industrial Conglomerates Industry, Machinery Industry, Marine Industry, Professional Services Industry, Road \& Rail Industry, Trading Companies \& Distributors Industry, Transportation Infrastructure Industry. The total value of all IN stocks in the U.S. came to \$3.80 trillion, or about 9.33\% of the market.
  \item Information Technology (IT): the IT sector is home to the hardware, software, computer equipment, and IT services operations that make it possible for you to be reading this right now. At present, the information technology sector contains six industries: Communications Equipment Industry, Electronic Equipment, Instruments \& Components Industry, IT Services Industry, Semiconductors \& Semiconductor Equipment Industry, Software Industry, Technology Hardware, Storage \& Peripherals Industry. The total value of all IT stocks in the United States came to \$7.10 trillion, or about 19.85\% of the market. It is the largest sector in the S\&P 500. Top IT stocks include Microsoft and Apple.
    \item Materials (M): The building blocks that supply the other sectors with the raw materials it needs to conduct business, the material sector manufacturers, logs, and mines everything from precious metals, paper, and chemicals to shipping containers, wood pulp, and industrial ore. At present, the material sector contains five industries: Chemicals Industry, Construction Materials Industry, Containers \& Packaging Industry, Metals \& Mining Industry, Paper \& Forest Products Industry. The total value of all materials stocks in the U.S. came to \$1.77 trillion, or about 2.71\% of the market. Major materials stocks include Dupont.
  \item Real Estate (RE): The RE sector includes all Real Estate Investment Trusts (REITs) with the exception of Mortgage REITs, which is housed under the financial sector. The sector also includes companies that manage and develop properties. At present, the RE sector is made up of two industries: Equity Real Estate Investment Trusts, Real Estate Management \& Development. The total value of all real estate stocks in the U.S. came to \$1.17 trillion, or 2.96\% of the market. The RE industry includes stocks such as Simon Property Group and Prologis.
  \item Utilities (U): The U sector of the economy is home to the firms that make our lights work when we flip the switch, let our stoves erupt in flame when we want to cook food, make water come out of the tap when we are thirsty, and more. At present, the utilities sector is made up of five industries: Electric Utilities Industry, Gas Utilities Industry, Independent Power and Renewable Electricity Producers Industry, Multi-Utilities Industry, Water Utilities Industry. The total value of all utilities stocks in the U.S. came to \$1.27 trillion, or about 3.18\% of the market. U stocks include many local electricity and water companies including Dominion Resources.
\end{itemize}

\newpage

\section{Model-free individual implied volatility}
\label{sec: appB}

Formalizing the implied volatility computation for each stock, we follow \cite{Bakshi2003} in adopting out-of-money (OTM) call and put option prices to compute the individual stock $s$th implied variance as
\begin{equation}
\label{eq:ivdecomp}
\sigma^2_{\text{VIX}^{(s)}} = \int_{P_t}^{\infty} \frac{2(1-\log(K/P_t))}{K^2}C(t,t+1,K) dK  + \int_{0}^{P_t} \frac{2(1+\log(P_t/K))}{K^2}P(t,t+1,K) d K,
\end{equation}
where $C(.)$ and $P(.)$ denote the time $t$ prices of call and put contracts, respectively, with time to maturity of one period and a strike price of $K$.

Intuitively, the implied variance measure can be computed in a model-free way from a range of option prices upon a discretization of formula (\ref{eq:ivdecomp}), adopting call and put option prices with respect to the next 30 days, considering all available strikes for each individual stock options. We compute the $\text{VIX}^{(s)}$ for all the stocks in our sample belonging to the 11 U.S. industries as follows:
\begin{equation}\label{eq:vix}
\sigma^2_{\text{VIX}^{(s)}}=\frac{2}{T}\sum_{i=1}^{n}\frac{\Delta K_i}{K_i^2}e^{rT}Q(K_i)-\frac{1}{T}\left[\frac{F}{K_0}-1\right]^2,\\
\end{equation}
where $T$ is time to expiration, $F$ is the forward index level derived from the put-call parity as $F= e^{rT}[C(K,T)-P(K,T)]+K$ with the risk-free rate $r$
, $K_0$ is the reference price, the first exercise price less or equal to the forward level $F(K_0 \leq F)$, and $K_i$ is the $i$th out-of-the-money (OTM) strike price available on a specific date (call if $K_i>K_0$, put if $K_i<K_0$, and both call and put if $K_i = K_0$). $Q(K_i)$ is the average bid-ask of OTM options with exercise price equal to $K_i$. If $K_i=K_0$, it will be equal to the average between the at-the-money (ATM) call and put price, relative to the strike price, and $\Delta(K_i)$ is the sum divided by two of the two nearest prices to the exercise price $K_0$, namely, $\frac{(K_{i+1}-K_{i-1})}{2}$ for $2\leq i \leq n-1$.
The annualized square roots of the quantities computed for each of the $s$-th individual firms are then labeled $\text{VIX}^{(s)}$ denoting individual, model-free implied volatility measures of the expected price fluctuations in the $s$-th underlying asset's options over the next month.\footnote{The standard CBOE methodology considers an interpolation between the two closest to 30-days expiration dates. We use a simplified formula taking into account only one expiration date closest to 30-days due to options data availability with respect to U.S. single stocks.}


\begin{equation}   \label{vixeq}
\text{VIX}^{(s)} =  \sqrt{\frac{365}{30} \sigma^{2}_{\text{VIX}^{(s)}}}
\end{equation}

\clearpage
\newpage

\setcounter{table}{0}
\setcounter{table}{0}
\renewcommand{\thetable}{B\arabic{table}}

\renewcommand{\thefigure}{B\arabic{figure}}

\setcounter{figure}{0}

\begin{landscape}
\begin{table}[h]
 \centering
 \caption{\textbf{List of Selected Stocks by Industry}} \label{StocksInfo}
 \begin{threeparttable}
 \centering
 \tiny{
 \begin{tabular}{ccccccc}
 \toprule
Ticker	&	Full name	&	Period	&&	Ticker	&	Full name	&	Period	\\
 \cmidrule{1-3} \cmidrule{5-7}
Consumer discretionary: 	&		&		&&	Industrial	&		&		\\
 \cmidrule{1-3} \cmidrule{5-7}
AMZN	&	Amazon.com Inc.	&	1996-2020	&&	BA	&	Boeing Company	&	1996-2020	\\
HD	&	Home Depot Inc.	&	1996-2020	&&	GE	&	General Electric Company	&	1996-2020	\\
MCD	&	McDonald's Corporation	&	1996-2020	&&	HON	&	Honeywell International Inc.	&	1996-2020	\\
NKE	&	NIKE Inc. Class B	&	1996-2020	&&	UNP	&	Union Pacific Corporation	&	1996-2020	\\
SBUX	& Starbucks Corporation		&	1996-2020	&&	UTX	&	United Technologies Corporation	&	1996-2018	\\
& &                                                 && CAT & Caterpillar Inc. & 2019-2020\\
\cmidrule{1-3} \cmidrule{5-7}
Communications: 	&		&		&&	Information Technology	&		&		\\
 \cmidrule{1-3} \cmidrule{5-7}
CMCSA	& Comcast Corporation Class A		&	1996-2004	&&	AAPL	& Apple Inc.		&	1996-2020	\\
DIS	&	Walt Disney Company	&	1996-2020	&&	ADBE	&	Adobe Inc.	&	1996-2020	\\
EA	&	Electronic Arts Inc.	&	1996-2015	&&	CSCO	&	Cisco Systems Inc.	&	1996-2015	\\
FB	&	Facebook Inc. Class A	&	2012-2020	&&	INTC	&	Intel Corporation	&	1996-2020	\\
GOOG 	&	Alphabet Inc. Class C	&	2004-2020	&&	MA	&	Mastercard Incorporated Class A	&	2007-2018	\\
T	&	AT\&T Inc.	&	1996-2020	&&	MSFT	&	Microsoft Corporation 	&	1996-2020	\\
VZ	&	Verizon Communications Inc.	&	1996-2020	&&	V	&	Visa Inc. Class A	&	2008-2020	\\
\cmidrule{1-3} \cmidrule{5-7}
Consumer staples: 	&		&&		&	Materials	&		&		\\
\cmidrule{1-3} \cmidrule{5-7}
COST	&Costco Wholesale Corporation		&	1996-2018	&&	APD	&	Air Products and Chemicals Inc	&	1996-2018	\\
KO	&	Coca-Cola Company	&	1996-2020	&&	DD	&	DuPont de Nemours Inc.	&	1996-2018	\\
PEP	&	PepsiCo Inc.	&	1996-2020	&&	ECL	&	Ecolab Inc.	&	1996-2018	\\
PG	&	Procter \& Gamble Company	&	1996-2020	&&	LIN	&	Linde plc	&	1996-2007	\\
WMT	&	Walmart Inc.	&	1996-2020	&&	PPG	&	PPG Industries Inc.	&	2007-2018	\\
PM   & Philip Morris Inc.&    2018-2020                 &&  SHW	&	Sherwin-Williams Company	&	1998-2020	\\
 & &                                                    && NEM & Newmont Corporation & 2019-2020\\
 & &                                                     && FCX & Freeport-McMoRan Inc. & 2019-2020\\
 & &                                                       && IP & 	International Paper & 2019-2020\\
\cmidrule{1-3}  \cmidrule{5-7}
Energy: 	&		&		&&	Real Estate	&		&		\\
\cmidrule{1-3}  \cmidrule{5-7}
COP	&	ConocoPhillips	&	1996-2020	&&	AMT	&	American Tower Corporation	&	1996-2020	\\
CVX	&	Chevron Corporation	&	1996-2020	&&	CCI	&	Crown Castle International Corp	&	1996-2018	\\
SLB	&	Schlumberger NV	&	1996-2020	&&	EQIX	&	Equinix Inc.	&	2006-2020	\\
XOM 	& Exxon Mobil Corporation		&	1996-2020	&&	EQR	&	Equity Residential	&	1996-2020	\\
VLO 	&	Valero Energy Corp	&	1996-2015 	&&	PLD	&	Prologis Inc.	&	1996-2018	\\
PSX	&	Phillips 66	&	 2012-2020	&&	SPG	&	Simon Property Group Inc.	&	1996-2020	\\
\cmidrule{1-3}  \cmidrule{5-7}
Financial	&		&		&&	Utilities	&		&		\\
\cmidrule{1-3}  \cmidrule{5-7}
BAC	&	Bank of America Corp	&	1996-2020	&&	AEP	&	American Electric Power Company Inc.	&	1996-2020	\\
BRK	&	Berkshire Hathaway Inc. Class B	&	2010-2020	&&	D	&	Dominion Energy Inc	&	1996-2020	\\
C	&	Citigroup Inc.	&	1996-2020	&&	DUK	&	Duke Energy Corporation	&	2006-2020	\\
GS	&	Goldman Sachs	&	1999-2018	&&	NEE	&	NextEra Energy Inc.	&	1996-2020	\\
JPM	&	JPMorgan Chase \& Co.	&	1996-2020	&&	SO	&	Southern Company	&	1996-2020	\\
WFC	&	Wells Fargo \& Company	&	1996-2020	&&   & & \\
\cmidrule{1-3}
Health care: 	&		&		&&		&		&		\\
\cmidrule{1-3}
ABT	&	Abbott Laboratories	&	1996-2019	&&		&		&		\\
JNJ & Johnson \& Johnson      &   1996-2020   &&      &       &       \\
MRK	&	Merck \& Co. Inc.	&	1996-2020	&&		&		&		\\
PFE	&	Pfizer Inc.	&	1996-2020	&&		&		&		\\
 UNH 	&	UnitedHealth Group Incorporated	&	1996-2020	&&		&		&		\\
 ABBV & AbbVie Inc. & 2018-2020 && &		&		\\
\bottomrule
\end{tabular}
\begin{tablenotes}
\item {\scriptsize \textit{Notes}: This table summarizes all the U.S. stocks selected in the paper, divided by industry. The stock tickers, full names and period of options data availability are reported.   }
\end{tablenotes}
}
\end{threeparttable}
\end{table}
\end{landscape}

\begin{figure}[ht!]
 \begin{center}
\caption{\textbf{Individual Firm Uncertainty}} \label{IVIX}
\includegraphics[width=\textwidth]{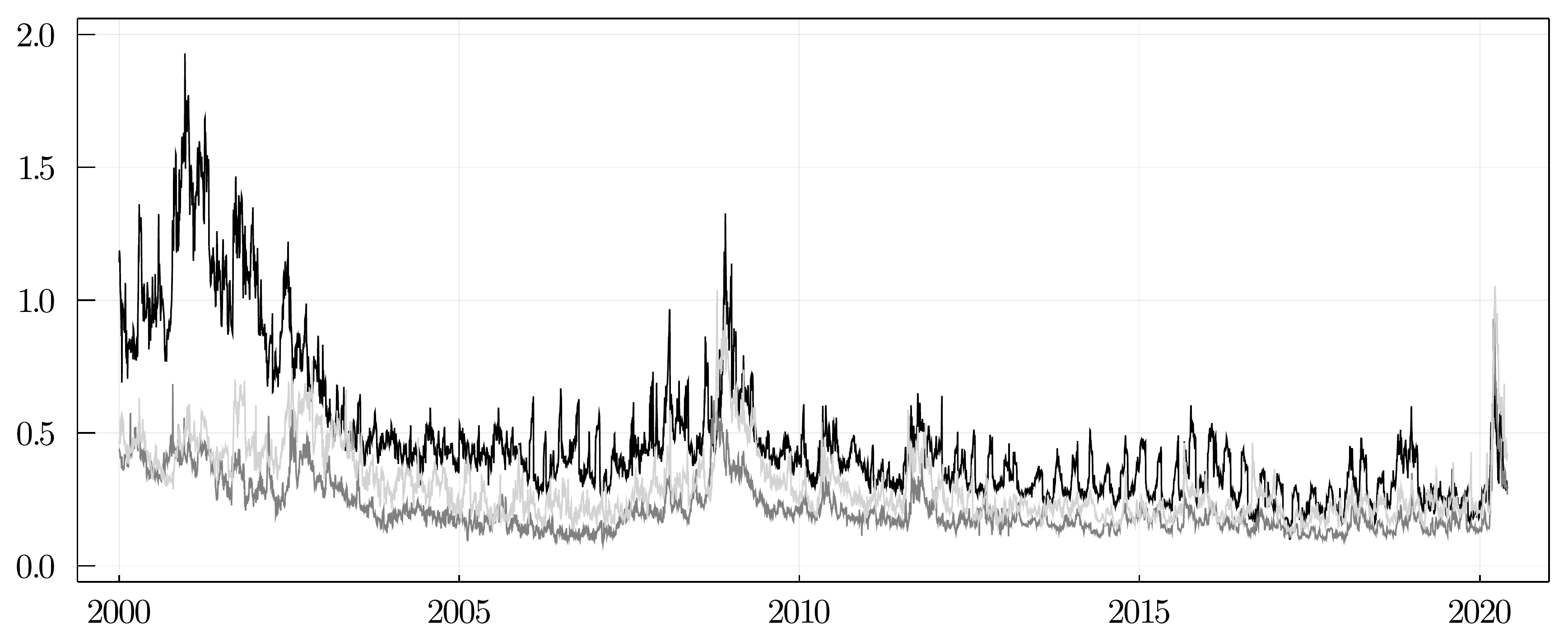}
\caption*{\textit{Notes}: This figure illustrates the individual uncertainty $\text{VIX}^{(i)}$ for Amazon (in black), CocaCola (in grey) and Disney (in light grey) from 03-01-2000 to 29-05-2020, at a daily frequency. }
\end{center}
\end{figure}

\section{Proofs}\label{app:proofs}
\begin{proof}[Proposition~\ref{prop:1}]

Let us have the VMA($\infty$) representation of the locally stationary TVP-VAR model  \citep{dahlhaus2009empirical,roueff2016prediction}
\begin{equation}
\bX_{t,T} = \sum_{h=-\infty}^{\infty} \bPsi_{t,T,h}\bepsilon_{t-h}
\end{equation}
$\bPsi_{t,T,h} \approx\bPsi(t/T,h)$ is a stochastic process satisfying $\sup_{\ell} ||\bPsi_t-\bPsi_{\ell}||^2 = O_p(h/t)$ for $1\le h \le t$ as $t\rightarrow \infty$, hence in a neighborhood of a fixed time point $u=t/T$ the process $\bX_{t,T}$ can be approximated by a stationary process $\widetilde{\bX}_t(u)$
\begin{equation}
\widetilde{\bX}_t(u) = \sum_{h=-\infty}^{\infty} \bPsi_h(u)\bepsilon_{t-h}
\end{equation}
with $\bepsilon$ being \textit{iid} process with $\mathbb{E}[\bepsilon_t]=0$, $\mathbb{E}[\bepsilon_s\bepsilon_t]=0$ for all $s\ne t$, and the local covariance matrix of the errors $\bSigma(u)$. Under suitable regularity conditions $|\bX_{t,T} - \widetilde{\bX}_t(u)| = O_p\big( |t/T-u|+1/T\big)$.

Since the errors are assumed to be serially uncorrelated, the total local covariance matrix of the forecast error conditional on the information at time $t-1$ is given by
\begin{equation}
	\bOmega^H(u) = \sum_{h=0}^{H} \bPsi_h(u) \bSigma(u) \bPsi^{\top}_h(u).
\end{equation}
Next, we consider the local covariance matrix of the forecast error conditional on knowledge of today's shock and future expected shocks to $k$-th variable. Starting from the conditional forecasting error,
\begin{equation}
\label{eq:LGFEVD1}
	\boldsymbol \xi^{k,H}(u) = \sum_{h=0}^{H} \bPsi_h(u) \Big[ \bepsilon_{t+H-h} - \mathbb{E}(\bepsilon_{t+H-h} | \bepsilon_{k,t+H-h})  \Big],
\end{equation} assuming normal distribution of $\bepsilon_t\sim N(0,\bSigma)$, we obtain\footnote{Note to notation: $[\boldsymbol A]_{j,k}$ denotes the $j$th row and $k$th column of matrix $\boldsymbol A$ denoted in bold. $[\boldsymbol A]_{j,\cdot}$ denotes the full $j$th row; this is similar for the columns. A $\sum A$, where $A$ is a matrix that denotes the sum of all elements of the matrix $A$.}
\begin{equation}
\label{eq:LGFEVD2}
\mathbb{E}(\bepsilon_{t+H-h} | \bepsilon_{k,t+H-h}) = \sigma_{kk}^{-1} \Big[\bSigma(u) \Big]_{\cdot k} \bepsilon_{k,t+H-h}
\end{equation}
and substituting (\ref{eq:LGFEVD2}) to (\ref{eq:LGFEVD1}), we obtain
\begin{equation}
	\boldsymbol \xi^{k,H}(u) = \sum_{h=0}^{H} \bPsi_h(u) \Big[ \bepsilon_{t+H-h} - \sigma_{kk}^{-1} \Big[\bSigma(u) \Big]_{\cdot k} \bepsilon_{k,t+H-h}  \Big].
\end{equation}
Finally, the local forecast error covariance matrix is
\begin{equation}
	\bOmega^{k,H}(u) = \sum_{h=0}^{H} \bPsi_h(u) \bSigma(u) \bPsi^{\top}_h(u) - \sigma_{kk}^{-1} \sum_{h=0}^{H} \bPsi_h(u) \Big[\bSigma(u) \Big]_{\cdot k} \Big[\bSigma(u) \Big]_{\cdot k}^{\top} \bPsi^{\top}_h(u).
\end{equation}
Then
\begin{equation}
	\Big[\boldsymbol \Delta^H(u)\Big]_{(j)k} = \Big[\bOmega^H(u) - \bOmega^{k,H}(u)\Big]_{j,j} = \sigma_{kk}^{-1} \sum_{h = 0}^{H} \Bigg( \Big[\bPsi_h(u) \bSigma(u)\Big]_{j,k} \Bigg)^2
\end{equation} is the unscaled local $H$-step ahead forecast error variance of the $j$-th component with respect to the innovation in the $k$-th component. Scaling the equation with $H$-step ahead forecast error variance with respect to the $j$th variable yields the desired time-varying generalized forecast error variance decompositions (TVP-GFEVD)
\begin{equation}
\label{eq:tvpgfevd}
	\Big[ \btheta^H(u) \Big]_{j,k} = \frac{\sigma_{kk}^{-1}\displaystyle\sum_{h = 0}^{H}\Bigg( \Big[\bPsi_h(u) \bSigma(u)\Big]_{j,k} \Bigg)^2}{\displaystyle \sum_{h=0}^{H} \Big[ \bPsi_h(u) \bSigma(u) \bPsi^{\top}_h(u) \Big]_{j,j}}
\end{equation}

	This completes the proof.
\end{proof}

\section{Estimation of the time-varying parameter VAR model} \label{app:estimate}

To estimate our high dimensional systems, we follow the Quasi-Bayesian Local-Liklihood (QBLL) approach of \cite{petrova2019quasi}. let $\bX_{t}$ be an $N \times 1$ vector generated by a stable time-varying parameter (TVP) heteroskedastic VAR model with $p$ lags:

\begin{equation}\label{eq:VAR}
\bX_{t,T}=\bPhi_{1}(t/T)\bX_{t-1,T}+\ldots+\bPhi_{p}(t/T)\bX_{t-p,T} + \bepsilon_{t,T},
\end{equation}
where $\bepsilon_{t,T}=\bSigma^{-1/2}(t/T)\bbeta_{t,T}$ with $\bbeta_{t,T}\sim NID(0,\boldsymbol{I}_M)$ and $\bPhi(t/T)=(\bPhi_{1}(t/T),\ldots,\bPhi_{p}(t/T))^{\top}$ are the time-varying autoregressive coefficients.
Note that all roots of the polynomial, $\chi(z)=\text{det}\left(\textbf{I}_{N}-\sum^{L}_{p=1}z^{p}\mathbf{B}_{p,t}\right)$, lie outside the unit circle, and $\bSigma^{-1}_{t}$ is a positive definite time-varying covariance matrix. Stacking the time-varying intercepts and autoregressive matrices in the vector $\phi_{t,T}$ with $\overline{\bX}^{\top}_{t} = \left(\text{\textbf{I}}_{N} \otimes x_{t}\right),\: x_{t}=\left(1,x^{\top}_{t-1},\dots,x^{\top}_{t-p}\right)$ and $\otimes$ denotes the Kronecker product, the model can be written as:
\begin{eqnarray}
\bX_{t,T} = \overline{\bX}^{\top}_{t,T}\phi_{t,T} + \bSigma^{-\frac{1}{2}}_{t/T}\bbeta_{t,T}
\end{eqnarray}
We obtain the time-varying parameters of the model by employing Quasi-Bayesian Local Likelihood (QBLL) methods. Estimation of (\ref{eq:VAR}) requires re-weighting the likelihood function. Essentially, the weighting function gives higher proportions to observations surrounding the time period whose parameter values are of interest. The local likelihood function at time period $k$ is given by:

\begin{eqnarray}
\begin{split}
\mathbf{L}_{k}\left(\bX|\theta_{k},\bSigma_{k},\overline{\bX} \right)&\propto& \\ |\bSigma_{k}|^{\text{trace}(\mathbf{D}_{k})/2}\exp\left\{-\frac{1}{2}(\bX-\overline{\bX}^{\top}\phi_{k})^{\top}\left(\bSigma_{k}\otimes\mathbf{D}_{k}\right)(\bX-\overline{\bX}^{\top}\phi_{k})\right\}
\end{split}
\end{eqnarray}

The $\mathbf{D}_{k}$ is a diagonal matrix whose elements hold the weights:
\begin{eqnarray}
\mathbf{D}_{k} &=& \text{diag}(\varrho_{k1},\dots,\varrho_{kT})\\
\varrho_{kt} &=& \phi_{T,k}w_{kt}/\sum^{T}_{t=1}w_{kt}\\
\label{eq:weight}
w_{kt} &=& (1/\sqrt{2\pi})\exp((-1/2)((k-t)/H)^{2}),\quad\text{for}\: k,t\in\{1,\dots,T\}\\
\zeta_{Tk} &=& \left(\left(\sum^{T}_{t=1}w_{kt}\right)^{2}\right)^{-1}
\end{eqnarray}
where $\varrho_{kt}$ is a normalised kernel function. $w_{kt}$ uses a Normal kernel weighting function. $\zeta_{Tk}$ gives the rate of convergence and behaves like the bandwidth parameter $H$ in (\ref{eq:weight}), and it is the kernel function that provides greater weight to observations surrounding the parameter estimates at time $k$ relative to more distant observations.

Using a Normal-Wishart prior distribution for $\phi_{k}|\:\bSigma_{k}$ for $k\in\{1,\dots,T\}$:
\begin{eqnarray}
\phi_{k}|\bSigma_{k} \backsim \mathcal{N}\left(\phi_{0k},(\bSigma_{k} \otimes \mathbf{\Xi}_{0k})^{-1}\right)\\
\bSigma_{k} \backsim \mathcal{W}\left(\alpha_{0k},\mathbf{\Gamma}_{0k}\right)
\end{eqnarray}
where $\phi_{0k}$ is a vector of prior means, $\mathbf{\Xi}_{0k}$ is a positive definite matrix, $\alpha_{0k}$ is a scale parameter of the Wishart distribution ($\mathcal{W}$), and $\mathbf{\Gamma}_{0k}$ is a positive definite matrix.

The prior and weighted likelihood function implies a Normal-Wishart quasi posterior distribution for $\phi_{k}|\:\bSigma_{k}$ for $k=\{1,\dots,T\}$. Formally let $\mathbf{A} = (\overline{x}^{\top}_{1},\dots,\overline{x}^{\top}_{T})^{\top}$ and $\mathbf{Y}=(x_{1},\dots,x_{T})^{\top}$ then:
\begin{eqnarray}
\phi_{k}|\bSigma_{k},\mathbf{A},\mathbf{Y} &\backsim & \mathcal{N}\left(\tilde{\theta}_{k},\left(\bSigma_{k}\otimes\mathbf{\tilde{\Xi}}_{k}\right)^{-1}\right)\\
\bSigma_{k} &\backsim & \mathcal{W}\left(\tilde{\alpha}_{k},\mathbf{\tilde{\Gamma}}^{-1}_{k} \right)
\end{eqnarray}
with quasi posterior parameters
\begin{eqnarray}
\tilde{\phi}_{k} &=& \left(\mathbf{I}_{N}\otimes \mathbf{\tilde{\Xi}}^{-1}_{k}\right)\left[\left(\mathbf{I}_{N}\otimes \mathbf{A}'\mathbf{D}_{k}\mathbf{A}\right)\hat{\phi}_{k}+ \left(\mathbf{I}_{N}\otimes \mathbf{\Xi}_{0k}\right)\phi_{0k} \right]\\
\mathbf{\tilde{\Xi}}_{k} &=& \mathbf{\tilde{\Xi}}_{0k} + \mathbf{A}'\mathbf{D}_{k}\mathbf{A}\\
\tilde{\alpha}_{k} &=& \alpha_{0k}+\sum^{T}_{t=1}\varrho_{kt}\\
\mathbf{\tilde{\Gamma}}_{k} &=& \mathbf{\Gamma}_{0k} + \mathbf{Y}'\mathbf{D}_{k}\mathbf{Y} + \mathbf{\Phi}_{0k}\mathbf{\Gamma}_{0k}\mathbf{\Phi}'_{0k} - \mathbf{\tilde{\Phi}}_{k}\mathbf{\tilde{\Gamma}}_{k}\mathbf{\tilde{\Phi}}^{\top}_{k}
\end{eqnarray}
where $\hat{\phi}_{k} = \left(\mathbf{I}_{N}\otimes \mathbf{A}'\mathbf{D}_{k}\mathbf{A}\right)^{-1}\left(\mathbf{I}_{N} \otimes \mathbf{A}'\mathbf{D}_{k}\right)y$ is the local likelihood estimator for $\phi_{k}$. The matrices $\mathbf{\Phi}_{0k},\:\mathbf{\tilde{\Phi}}_{k}$ are conformable matrices from the vector of prior means, $\phi_{0k}$, and a draw from the quasi posterior distribution, $\tilde{\phi}_{k}$, respectively.

The motivation for employing these methods are threefold. First, we are able to estimate large systems that conventional Bayesian estimation methods do not permit. This is typically because the state-space representation of an $N$-dimensional TVP-VAR ($p$) requires an additional $N(3/2 + N(p+1/2))$ state equations for every additional variable. Conventional Markov Chain Monte Carlo (MCMC) methods fail to estimate larger models, which in general confine one to (usually) fewer than 6 variables in the system. Second, the standard approach is fully parametric and requires a law of motion. This can distort inference if the true law of motion is misspecified. Third, the methods used here permit direct estimation of the VAR's time-varying covariance matrix, which has an inverse-Wishart density and is symmetric positive definite at every point in time.

In estimating the model, we use $p$=2 and a Minnesota Normal-Wishart prior with a shrinkage value $\varphi=0.05$ and centre the coefficient on the first lag of each variable to 0.1 in each respective equation. The prior for the Wishart parameters are set following \cite{kadiyala1997numerical}. For each point in time, we run 500 simulations of the model to generate the (quasi) posterior distribution of parameter estimates. Note we experiment with various lag lengths, $p=\{2,3,4,5\}$; shrinkage values, $\varphi=\{0.01, 0.25, 0.5\}$; and values to centre the coefficient on the first lag of each variable, $\{0, 0.05, 0.2, 0.5\}$. Network measures from these experiments are qualitatively similar. Notably, adding lags to the VAR  and increasing the persistence in the prior value of the first lagged dependent variable in each equation increases computation time.

\section{Net and Agg Uncertainty Connectedness Measures}

\setcounter{table}{0}
\setcounter{table}{0}
\renewcommand{\thetable}{E\arabic{table}}

\renewcommand{\thefigure}{E\arabic{figure}}

\setcounter{figure}{0}

\begin{table} [ht!]
  \centering
  \caption{\textbf{$\mathcal{C}_{\textsc{NET}}$ and $\mathcal{C}_{\textsc{AGG}}$ across each business cycle}}   \label{CumNetBusinesscylce}
  \begin{threeparttable}
  \scriptsize{
  \centering
  \begin{tabular}{ccccccccccccccccc}
  \toprule
 &&  \multicolumn{3}{c}{\textbf{Dot com Inv}} &&  \multicolumn{3}{c}{\textbf{Dot com Rec}}&&  \multicolumn{3}{c}{\textbf{Exp after Dot com}} &&  \multicolumn{3}{c}{\textbf{GFC Inv}}\\		
\cmidrule{1-5}	 \cmidrule{7-9}	\cmidrule{11-13}	\cmidrule{15-17}	
 &&	  NET	&AGG & AGG $\%$  &&	 NET & AGG & AGG $\%$ &&NET	&AGG & AGG $\%$   && NET	&AGG  & AGG $\%$ \\
 \cmidrule{1-5}	 \cmidrule{7-9}	\cmidrule{11-13}	\cmidrule{15-17}	
CD&&	-2.27&	20.81 & 10    &&	0.81&	66.58 & 13.5 &&	-2.04&	10.99 & 8.2  &&	-1.17&	12.72  & 9.4\\
CM&&	-3.09&	14.28&   6.9  &&	-0.26&	38.82& 7.8   &&	-1.22&	11.73& 8.8   &&	-0.02&	11.85  & 8.8\\
CS&&	-1.02&	13.08& 6.3    &&	-4.91&	47.86& 9.7   &&	-0.76&	11.38& 8.6   &&	-0.72&	11.17& 8.3\\
E&&	    -6.24&	21.21& 10.2   &&	0.37&	42.96& 8.7   &&	-0.71&	12.57& 9.4   &&	-2.79&	12.54& 9.3\\
F&&	     0.14&	18.77& 9.1      &&	0.06&	66.84& 13.5   && 1.51 &	16.86& 12.7  &&	7.38&	23.37& 17.3\\
HC&&	-1.31&	18.31& 8.8    &&	-4.11&	54.12& 11   &&	-0.42&	12.24& 9.2   &&	-0.21&	10.14& 7.5\\
IN&&	 1.79&	28.56& 13.8   &&	-3.27&	62.47& 12.6   &&	0.96&	13.47& 10.1  &&	0.05&	12.70& 9.4\\
IT&&	 9.79&	36.39& 17.6   &&	6.48&	68.59& 13.9   &&	1.03&	12.51& 9.4   &&	-0.94&	13.79& 10.2\\
M&&	    -0.62& 	10.64& 5.1    &&	-1.47&	13.14& 2.7    &&	-1.03&	9.24& 6.9   &&	0.04&	9.66& 7.1\\
RE&&	 3.68&	16.34& 7.9    &&	7.02&	24.92& 5.1    &&	2.52&	13.09& 9.8   &&	-1.99&	7.67& 5.7\\
U&&	     -0.85&	8.71& 4.3       &&	-0.67&	7.89& 1.5    &&	0.17&	9.12& 6.9   &&	-0.32&	9.50& 7\\
\toprule
 &&  \multicolumn{3}{c}{\textbf{GFC Rec}} &&  \multicolumn{3}{c}{\textbf{Exp after GFC}}&&  \multicolumn{3}{c}{\textbf{Covid-19 Rec}} &&  \multicolumn{3}{c}{\textbf{Total Period}}\\		
\cmidrule{1-5}	 \cmidrule{7-9}	\cmidrule{11-13}	\cmidrule{15-17}	
  &&	  NET	&AGG & AGG $\%$  &&	 NET & AGG & AGG $\%$ &&NET	&AGG & AGG $\%$   && NET	&AGG  & AGG $\%$ \\
 \cmidrule{1-5}	 \cmidrule{7-9}	\cmidrule{11-13}	\cmidrule{15-17}	
CD&&	-2.98&	16.12  & 6.5  &&	0.07&	40.24  & 12.1  &&	3.64&	22.34  & 10.1&&	-0.71&	29.61 & 11.1\\
CM&&    -2.21&	25.37& 10.2   &&	3.97&	42.58& 12.8    &&	1.61&	22.46& 10.1&&	1.55&	30.49& 11.5\\
CS&&     1.09&	24.50& 9.8    &&	-2.33&	30.48& 9.1     &&	3.36&	25.75& 11.6&&	-1.50&	24.27& 9.1\\
E&&      -1.87&	23.44& 9.4   &&	-0.16&	37.61& 11.2    &&	0.22&	25.71& 11.6&&	-0.80&	28.27& 10.6\\
F&&      8.61&	40.32& 16.2    &&	-1.58&	19.17& 5.8 &&	-8.63&	14.51& 6.5 &&	0.54&	22.15& 8.3\\
HC&&    -1.81&	20.04&  8      &&	-0.17&	35.02& 10.5&&	-0.53&	14.94& 6.7&&	-0.56&	26.45& 9.9\\
IN&&      3.41&	33.15& 13.3&&	-0.43&	39.94& 12.0    &&	-0.77&	24.64& 11.1&&	0.17&	31.47& 11.9\\
IT&&    -1.46&	25.93& 10.5&&	2.84&	44.45& 13.3     &&	3.30&	24.71& 11.5&&	2.23&	33.71& 12.7\\
M&&     -0.79&	15.93& 6.2 &&	-0.99&	20.01& 5.9     &&	-3.06&	19.40& 8.7 &&	-0.98&	15.84& 6.0\\
RE&&    -0.89&	14.47&	5.8&&	-0.55 & 14.68& 4.4     &&	-1.48&	12.14& 5.4&&	0.51&	14.22& 5.3\\
U&&     -1.09&	10.27&	4.1&&	-0.66 & 9.70& 2.9      &&	2.33&	14.97& 6.7 &&	-0.43&	9.59& 3.6\\
\bottomrule
 \end{tabular}
  \caption*{\scriptsize \textit{Notes}: The table shows the average \textsc{net} and \textsc{agg} network characteristics with respect to the 11 U.S. industries. When the \textsc{net} measure is positive the \indvix can be classified as a \textsc{net} marginal transmitter, while, when negative, it can be classified as a \textsc{net}  marginal receiver. The \textsc{agg} statistic is computed as the sum (in absolute values) between \textsc{to} and \textsc{from}. The highest values of the \textsc{agg} statistic are associated with uncertainty hubs, while the lowest with uncertainty non-hubs. The statistics are reported for each business cycle in our sample, namely inversion, recession, expansion separately and also for the total period, namely from 03-01-2000 to 29-05-2020, at a daily frequency.
  }
  }
  \end{threeparttable}
\end{table}

\clearpage
\newpage
\section{Robustness Checks}

\setcounter{table}{0}
\setcounter{table}{0}
\renewcommand{\thetable}{F\arabic{table}}

\renewcommand{\thefigure}{F\arabic{figure}}

\setcounter{figure}{0}

\begin{table}[h!]
 \centering
 \caption{ADS Predictive Results} \label{ADSPre}
 \begin{threeparttable}
 \centering
\scriptsize{
\begin{tabular}{clcccccc}
  \toprule
&&  \multicolumn{5}{c}{Panel A: ADS}\\
\toprule
&&  \multicolumn{1}{c}{\textit{h=1}} &  \multicolumn{1}{c}{\textit{h=3}} & \multicolumn{1}{c}{\textit{h=6}} & \multicolumn{1}{c}{\textit{h=9}} & \multicolumn{1}{c}{\textit{h=12}}\\
  \cline{1-3} \cline{4-4} \cline{5-5} \cline{6-6} \cline{7-7}
\toprule
$\mathcal{C}_t$ $\mid$ $X_t$  & &  -0.020 & -0.032* & -0.036** & -0.055*** & -0.038** \\
  & & (0.016) & (0.019) & (0.019) & (0.017) & (0.019) \\
Adj. $R^{2}$ & & \multicolumn{1}{c}{0.273} & \multicolumn{1}{c}{0.030} & \multicolumn{1}{c}{0.029} & \multicolumn{1}{c}{0.063} & \multicolumn{1}{c}{0.072} \\
Obs &  & \multicolumn{1}{c}{243} & \multicolumn{1}{c}{241} & \multicolumn{1}{c}{238} & \multicolumn{1}{c}{235} & \multicolumn{1}{c}{232} \\
\toprule
&&  \multicolumn{5}{c}{Panel B: ADS Expansion}\\
\toprule
&&  \multicolumn{1}{c}{\textit{h=1}} &  \multicolumn{1}{c}{\textit{h=3}} & \multicolumn{1}{c}{\textit{h=6}} & \multicolumn{1}{c}{\textit{h=9}} & \multicolumn{1}{c}{\textit{h=12}}\\
  \cline{1-3} \cline{4-4} \cline{5-5} \cline{6-6} \cline{7-7}
\toprule
$\mathcal{C}_t$ $\mid$ $X_t$  & &   -0.002 & -0.003 & -0.005* & -0.007*** & -0.007*** \\
  & & (0.003) & (0.003) & (0.003) & (0.003) & (0.003) \\
Adj. $R^{2}$ & & \multicolumn{1}{c}{0.009} & \multicolumn{1}{c}{0.055} & \multicolumn{1}{c}{0.065} & \multicolumn{1}{c}{0.082} & \multicolumn{1}{c}{0.056} \\
Obs &  & \multicolumn{1}{c}{243} & \multicolumn{1}{c}{241} & \multicolumn{1}{c}{238} & \multicolumn{1}{c}{235} & \multicolumn{1}{c}{232} \\
\toprule
&&  \multicolumn{5}{c}{Panel C: ADS Recession}\\
\toprule
&&  \multicolumn{1}{c}{\textit{h=1}} &  \multicolumn{1}{c}{\textit{h=3}} & \multicolumn{1}{c}{\textit{h=6}} & \multicolumn{1}{c}{\textit{h=9}} & \multicolumn{1}{c}{\textit{h=12}}\\
\toprule
$\mathcal{C}_t$ $\mid$ $X_t$  & &   -0.022 & -0.025 & -0.031* & -0.048*** & -0.031* \\
  & & (0.016) & (0.019) & (0.019) & (0.019) & (0.019) \\
Adj. $R^{2}$ & & \multicolumn{1}{c}{0.273} & \multicolumn{1}{c}{0.022} & \multicolumn{1}{c}{0.025} & \multicolumn{1}{c}{0.053} & \multicolumn{1}{c}{0.067} \\
Obs &  & \multicolumn{1}{c}{243} & \multicolumn{1}{c}{241} & \multicolumn{1}{c}{238} & \multicolumn{1}{c}{235} & \multicolumn{1}{c}{232} \\
\bottomrule
\end{tabular}
\begin{tablenotes}
\item {\scriptsize \textit{Notes}: This table presents the results of the predictive regression in equation \ref{PredictregresstotControl} between the aggregate network connectedness and the 3-month moving average of the ADS indicator of business cycle (Panel A). In Panel B and Panel C the results of the predictive regression with respect to the ADS expansion and recession indicators are reported, respectively. We also add a set of controls, $X$. The five columns of the table represent different predictability horizons with $h \in (1,3,6,9,12)$. Regressions' coefficients and standard errors (in parentheses), and adjusted-$R^2$ are reported. Coefficients are marked with *, **, *** for 10\%, 5\%, 1\% significance levels, respectively. Intercept and controls results are not reported for the sake of space. Series are considered at a monthly frequency between 01-2000 and 05-2020.}
\end{tablenotes}
}
\end{threeparttable}
\end{table}

\begin{table}[h!]
 \centering
 \caption{ADS Predictive Results} \label{IPU.S.CIPre}
 \begin{threeparttable}
 \centering
\scriptsize{
\begin{tabular}{clcccccc}
  \toprule
&&  \multicolumn{5}{c}{Panel A: IP Growth Rate}\\
\toprule
&&  \multicolumn{1}{c}{\textit{h=1}} &  \multicolumn{1}{c}{\textit{h=3}} & \multicolumn{1}{c}{\textit{h=6}} & \multicolumn{1}{c}{\textit{h=9}} & \multicolumn{1}{c}{\textit{h=12}}\\
\toprule
$\mathcal{C}_t$ $\mid$ $X_t$  & & -0.001 & -0.003** & -0.003** & -0.003*** & -0.003** \\
 & & (0.001) & (0.001) & (0.001) & (0.001) & (0.001) \\
Adj. $R^{2}$ & &  \multicolumn{1}{c}{0.247} & \multicolumn{1}{c}{0.054} & \multicolumn{1}{c}{0.035} & \multicolumn{1}{c}{0.057} & \multicolumn{1}{c}{0.062} \\
Obs &  & \multicolumn{1}{c}{243} & \multicolumn{1}{c}{241} & \multicolumn{1}{c}{238} & \multicolumn{1}{c}{235} & \multicolumn{1}{c}{232} \\
\toprule
&&  \multicolumn{5}{c}{Panel B: U.S.CI Growth Rate}\\
\toprule
&&  \multicolumn{1}{c}{\textit{h=1}} &  \multicolumn{1}{c}{\textit{h=3}} & \multicolumn{1}{c}{\textit{h=6}} & \multicolumn{1}{c}{\textit{h=9}} & \multicolumn{1}{c}{\textit{h=12}}\\
\toprule
$\mathcal{C}_t$ $\mid$ $X_t$  & &   -0.019 & -0.060** & -0.087*** & -0.150*** & -0.140*** \\
 & & (0.027) & (0.029) & (0.030) & (0.027) & (0.026) \\
Adj. $R^{2}$ & & \multicolumn{1}{c}{0.009} & \multicolumn{1}{c}{0.055} & \multicolumn{1}{c}{0.065} & \multicolumn{1}{c}{0.082} & \multicolumn{1}{c}{0.056} \\
Obs &  & \multicolumn{1}{c}{0.205} & \multicolumn{1}{c}{0.078} & \multicolumn{1}{c}{0.058} & \multicolumn{1}{c}{0.213} & \multicolumn{1}{c}{0.280} \\
\bottomrule
\end{tabular}
\begin{tablenotes}
\item {\scriptsize \textit{Notes}: This table presents the results of the predictive regression in equation \ref{PredictregresstotControl} between the aggregate network connectedness, and the industrial production growth rate in Panel A, and the U.S. coincident indicator in Panel B. We also add a set of controls, $X$. The five columns of the table represent different predictability horizons with $h \in (1,3,6,9,12)$. Regressions' coefficients and standard errors (in parentheses), and adjusted-$R^2$ are reported. Coefficients are marked with *, **, *** for 10\%, 5\%, 1\% significance levels, respectively. Intercept and controls results are not reported for the sake of space. Series are considered at a monthly frequency between 01-2000 and 05-2020.}
\end{tablenotes}
}
\end{threeparttable}
\end{table}

\begin{table}[h!]
 \centering
 \caption{Coincident Indicators Predictive Results (2) } \label{CFNAIPreU.S.CIcontrol}
 \begin{threeparttable}
 \centering
\scriptsize{
\begin{tabular}{clcccccc}
\toprule
&&  \multicolumn{5}{c}{Dependent: CFNAI-3M}\\
\toprule
&&  \multicolumn{1}{c}{\textit{h=1}} &  \multicolumn{1}{c}{\textit{h=3}} & \multicolumn{1}{c}{\textit{h=6}} & \multicolumn{1}{c}{\textit{h=9}} & \multicolumn{1}{c}{\textit{h=12}}\\
\toprule
$\mathcal{C}_t$ $\mid$ $X_t$  & &  -0.008** & -0.021*** & -0.023*** & -0.039*** & -0.028*** \\
  & & (0.004) & (0.007) & (0.007) & (0.007) & (0.007) \\
U.S.CI & & 0.104*** & 0.064*** & 0.044*** & 0.025** & 0.003 \\
  & & (0.006) & (0.010) & (0.011) & (0.011) & (0.011) \\
Adj. $R^{2}$ & & \multicolumn{1}{c}{0.678} & \multicolumn{1}{c}{0.280} & \multicolumn{1}{c}{0.123} & \multicolumn{1}{c}{0.191} & \multicolumn{1}{c}{0.164} \\
Obs &  & \multicolumn{1}{c}{243} & \multicolumn{1}{c}{241} & \multicolumn{1}{c}{238} & \multicolumn{1}{c}{235} & \multicolumn{1}{c}{232} \\
\toprule
&&  \multicolumn{5}{c}{Dependent: ADS}\\
\toprule
&&  \multicolumn{1}{c}{\textit{h=1}} &  \multicolumn{1}{c}{\textit{h=3}} & \multicolumn{1}{c}{\textit{h=6}} & \multicolumn{1}{c}{\textit{h=9}} & \multicolumn{1}{c}{\textit{h=12}}\\
\toprule
$\mathcal{C}_t$ $\mid$ $X_t$  & &   -0.012 & -0.023 & -0.033** & -0.052*** & -0.037** \\
  & & (0.014) & (0.018) & (0.019) & (0.019) & (0.019) \\
U.S.CI & & 0.172*** & 0.088*** & 0.059** & 0.041 & 0.023 \\
  & & (0.020) & (0.029) & (0.029) & (0.029) & (0.029) \\
Adj. $R^{2}$ & & \multicolumn{1}{c}{0.440} & \multicolumn{1}{c}{0.064} & \multicolumn{1}{c}{0.042} & \multicolumn{1}{c}{0.068} & \multicolumn{1}{c}{0.071} \\
Obs &  & \multicolumn{1}{c}{243} & \multicolumn{1}{c}{241} & \multicolumn{1}{c}{238} & \multicolumn{1}{c}{235} & \multicolumn{1}{c}{232} \\
\bottomrule
\end{tabular}
\begin{tablenotes}
\item {\scriptsize \textit{Notes}: This table presents the results of the predictive regressions between the aggregate network connectedness, and the coincident indicators of business cycle, namely CFNAI and ADS. We present results for regression equation \ref{PredictregresstotControl} in which we add a set of controls including also the leading indicator, U.S.LI. The five columns of the table represent different predictability horizons with $h \in (1,3,6,9,12)$. Regressions' coefficients and standard errors (in parentheses), and adjusted-$R^2$ are reported. Coefficients are marked with *, **, *** for 10\%, 5\%, 1\% significance levels, respectively. Intercept and controls results are not reported for the sake of space, exception for the U.S.LI control. Series are considered at a monthly frequency between 01-2000 and 05-2020. }
\end{tablenotes}
}
\end{threeparttable}
\end{table}

\clearpage
\newpage
\section{Hubs and non-hubs predictive results }

\setcounter{table}{0}
\setcounter{table}{0}
\renewcommand{\thetable}{G\arabic{table}}

\renewcommand{\thefigure}{G\arabic{figure}}

\setcounter{figure}{0}

\begin{table}[h!]
 \centering
 \caption{CFNAI-MA3 Hubs Network Predictive Results} \label{CFNAI-MA3Hubs}
 \begin{threeparttable}
 \centering
\scriptsize{
\begin{tabular}{clcccccc}
\toprule
&&  \multicolumn{5}{c}{Panel A: CFNAI-MA3}\\
\toprule
  &&  \multicolumn{1}{c}{\textit{h=1}} &  \multicolumn{1}{c}{\textit{h=3}} & \multicolumn{1}{c}{\textit{h=6}} & \multicolumn{1}{c}{\textit{h=9}} & \multicolumn{1}{c}{\textit{h=12}}\\
\toprule
 $\mathcal{C}_t^{\text{hub}}$ $\mid$ $X_t$ & & -0.009** & -0.016*** & -0.019*** & -0.023*** & -0.018*** \\
  & & (0.004) & (0.004) & (0.004) & (0.004) & (0.004) \\
 Adj. $R^{2}$ &  & \multicolumn{1}{c}{0.302} & \multicolumn{1}{c}{0.176} & \multicolumn{1}{c}{0.104} & \multicolumn{1}{c}{0.178} & \multicolumn{1}{c}{0.186} \\
 Obs & & \multicolumn{1}{c}{243} & \multicolumn{1}{c}{241} & \multicolumn{1}{c}{238} & \multicolumn{1}{c}{235} & \multicolumn{1}{c}{232} \\
\toprule
&&  \multicolumn{5}{c}{Panel B: CFNAI-MA3 Expansion}\\
\toprule
  &&  \multicolumn{1}{c}{\textit{h=1}} &  \multicolumn{1}{c}{\textit{h=3}} & \multicolumn{1}{c}{\textit{h=6}} & \multicolumn{1}{c}{\textit{h=9}} & \multicolumn{1}{c}{\textit{h=12}}\\
\toprule
$\mathcal{C}_t^{\text{hub}}$ $\mid$ $X_t$ & & -0.004*** & -0.005*** & -0.006*** & -0.006*** & -0.006*** \\
  && (0.001) & (0.001) & (0.001) & (0.001) & (0.001) \\
 Adj. $R^{2}$ &  & \multicolumn{1}{c}{0.099} & \multicolumn{1}{c}{0.152} & \multicolumn{1}{c}{0.238} & \multicolumn{1}{c}{0.290} & \multicolumn{1}{c}{0.289} \\
 Obs & & \multicolumn{1}{c}{243} & \multicolumn{1}{c}{241} & \multicolumn{1}{c}{238} & \multicolumn{1}{c}{235} & \multicolumn{1}{c}{232} \\
 \toprule
&&  \multicolumn{5}{c}{Panel C: CFNAI-MA3 Recession}\\
\toprule
  &&  \multicolumn{1}{c}{\textit{h=1}} &  \multicolumn{1}{c}{\textit{h=3}} & \multicolumn{1}{c}{\textit{h=6}} & \multicolumn{1}{c}{\textit{h=9}} & \multicolumn{1}{c}{\textit{h=12}}\\
\toprule
 $\mathcal{C}_t^{\text{hub}}$ $\mid$ $X_t$ & & -0.005 & -0.012*** & -0.013*** & -0.017*** & -0.012*** \\
  && (0.003) & (0.004) & (0.004) & (0.004) & (0.004) \\
 Adj. $R^{2}$ &  & \multicolumn{1}{c}{0.291} & \multicolumn{1}{c}{0.153} & \multicolumn{1}{c}{0.087} & \multicolumn{1}{c}{0.148} & \multicolumn{1}{c}{0.153} \\
 Obs & & \multicolumn{1}{c}{243} & \multicolumn{1}{c}{241} & \multicolumn{1}{c}{238} & \multicolumn{1}{c}{235} & \multicolumn{1}{c}{232} \\
\bottomrule
\end{tabular}
\begin{tablenotes}
\item {\scriptsize \textit{Notes}: This table presents the results of the predictive regression \ref{Predicthubsmulti} (with $\beta_{non-hub}$ set equal to 0) considering only the predicting results of the uncertainty hub sub-network with respect to the 3-month moving average of the Chicago FED National Activity Index (CFNAI-MA3), indicator of business cycle (Panel A). In Panel B and Panel C the results of the predictive regression with respect to the CFNAI-MA3 expansion and recession indicators are reported, respectively. The five columns of the table represent different predictability horizons with $h \in (1,3,6,9,12)$. Regressions' coefficients and standard errors (in parentheses), and adjusted-$R^2$ are reported. Coefficients are marked with *, **, *** for 10\%, 5\%, 1\% significance levels, respectively. Intercept and controls results are not reported for the sake of space. Series are considered at a monthly frequency between 01-2000 and 05-2020. }
\end{tablenotes}
}
\end{threeparttable}
\end{table}

\begin{table}[h!]
 \centering
 \caption{Stricter Hubs vs non-Hubs Network Predictive Results} \label{CFNAIHubsnoHubsSTRICT}
 \begin{threeparttable}
 \centering
\scriptsize{
\begin{tabular}{clcccccc}
 \toprule
&&  \multicolumn{5}{c}{Panel A: CFNAI-MA3}\\
\toprule
   &&  \multicolumn{1}{c}{\textit{h=1}} &  \multicolumn{1}{c}{\textit{h=3}} & \multicolumn{1}{c}{\textit{h=6}} & \multicolumn{1}{c}{\textit{h=9}} & \multicolumn{1}{c}{\textit{h=12}}\\
 \toprule
 $\mathcal{C}_t^{\text{hub}}$ $\mid$ $X_t$ & & -0.010*** & -0.013*** & -0.016*** & -0.017*** & -0.016*** \\
  &&  (0.003) & (0.004) & (0.004) & (0.004) & (0.004) \\
    $\mathcal{C}_t^{\text{non-hub}}$ $\mid$ $X_t$&& -0.014** & -0.022*** & -0.015** & -0.013* & -0.0003 \\
  &&  (0.006) & (0.006) & (0.007) & (0.007) & (0.007) \\
 Adj. $R^{2}$ &  &   \multicolumn{1}{c}{0.334} & \multicolumn{1}{c}{0.224} & \multicolumn{1}{c}{0.121} & \multicolumn{1}{c}{0.163} & \multicolumn{1}{c}{0.173} \\
 Obs & & \multicolumn{1}{c}{243} & \multicolumn{1}{c}{241} & \multicolumn{1}{c}{238} & \multicolumn{1}{c}{235} & \multicolumn{1}{c}{232} \\
  \toprule
&&  \multicolumn{5}{c}{Panel B: CFNAI-MA3 Expansion}\\
\toprule
   &&  \multicolumn{1}{c}{\textit{h=1}} &  \multicolumn{1}{c}{\textit{h=3}} & \multicolumn{1}{c}{\textit{h=6}} & \multicolumn{1}{c}{\textit{h=9}} & \multicolumn{1}{c}{\textit{h=12}}\\
 \toprule
 $\mathcal{C}_t^{\text{hub}}$ $\mid$ $X_t$ & & -0.003** & -0.004*** & -0.004*** & -0.005*** & -0.006*** \\
  && (0.001) & (0.001) & (0.001) & (0.001) & (0.001) \\
  $\mathcal{C}_t^{\text{non-hub}}$ $\mid$ $X_t$&& 0.0004 & -0.002 & -0.005*** & -0.002 & -0.0002 \\
  && (0.002) & (0.002) & (0.002) & (0.002) & (0.002) \\
 Adj. $R^{2}$ &  &  \multicolumn{1}{c}{0.073} & \multicolumn{1}{c}{0.135} & \multicolumn{1}{c}{0.240} & \multicolumn{1}{c}{0.297} & \multicolumn{1}{c}{0.293} \\
 Obs & & \multicolumn{1}{c}{243} & \multicolumn{1}{c}{241} & \multicolumn{1}{c}{238} & \multicolumn{1}{c}{235} & \multicolumn{1}{c}{232} \\
 \toprule
&&  \multicolumn{5}{c}{Panel C: CFNAI-MA3 Recession}\\
\toprule
   &&  \multicolumn{1}{c}{\textit{h=1}} &  \multicolumn{1}{c}{\textit{h=3}} & \multicolumn{1}{c}{\textit{h=6}} & \multicolumn{1}{c}{\textit{h=9}} & \multicolumn{1}{c}{\textit{h=12}}\\
 \toprule
 $\mathcal{C}_t^{\text{hub}}$ $\mid$ $X_t$ & & -0.007** & -0.010*** & -0.012*** & -0.012*** & -0.010*** \\
  && (0.003) & (0.004) & (0.004) & (0.004) & (0.004) \\
   $\mathcal{C}_t^{\text{non-hub}}$ $\mid$ $X_t$&& -0.015*** & -0.020*** & -0.009 & -0.012* & -0.0002 \\
  && (0.005) & (0.006) & (0.006) & (0.006) & (0.006) \\
 Adj. $R^{2}$ &  &   \multicolumn{1}{c}{0.326} & \multicolumn{1}{c}{0.200} & \multicolumn{1}{c}{0.098} & \multicolumn{1}{c}{0.135} & \multicolumn{1}{c}{0.141} \\
 Obs & & \multicolumn{1}{c}{243} & \multicolumn{1}{c}{241} & \multicolumn{1}{c}{238} & \multicolumn{1}{c}{235} & \multicolumn{1}{c}{232} \\
 \toprule
&&  \multicolumn{5}{c}{Panel D: CLI}\\
\toprule
   &&  \multicolumn{1}{c}{\textit{h=1}} &  \multicolumn{1}{c}{\textit{h=3}} & \multicolumn{1}{c}{\textit{h=6}} & \multicolumn{1}{c}{\textit{h=9}} & \multicolumn{1}{c}{\textit{h=12}}\\
 \toprule
 $\mathcal{C}_t^{\text{hub}}$ $\mid$ $X_t$ & & -0.013** & -0.024*** & -0.038*** & -0.048*** & -0.050*** \\
  & & (0.005) & (0.006) & (0.006) & (0.006) & (0.006) \\
  $\mathcal{C}_t^{\text{non-hub}}$ $\mid$ $X_t$ && -0.010 & -0.017* & -0.019* & -0.022** & -0.011 \\
  &&  (0.008) & (0.009) & (0.010) & (0.010) & (0.010) \\
 Adj. $R^{2}$ &  &  \multicolumn{1}{c}{0.451} & \multicolumn{1}{c}{0.341} & \multicolumn{1}{c}{0.285} & \multicolumn{1}{c}{0.338} & \multicolumn{1}{c}{0.334} \\
 Obs & & \multicolumn{1}{c}{243} & \multicolumn{1}{c}{241} & \multicolumn{1}{c}{238} & \multicolumn{1}{c}{235} & \multicolumn{1}{c}{232} \\
 \toprule
&&  \multicolumn{5}{c}{Panel E: CFNAI-MA3 controlling for CLI}\\
\toprule
   &&  \multicolumn{1}{c}{\textit{h=1}} &  \multicolumn{1}{c}{\textit{h=3}} & \multicolumn{1}{c}{\textit{h=6}} & \multicolumn{1}{c}{\textit{h=9}} & \multicolumn{1}{c}{\textit{h=12}}\\
 \toprule
 $\mathcal{C}_t^{\text{hub}}$ $\mid$ $X_t$ & & -0.006** & -0.010*** & -0.014*** & -0.016*** & -0.016*** \\
  && (0.003) & (0.003) & (0.004) & (0.004) & (0.004) \\
   $\mathcal{C}_t^{\text{non-hub}}$ $\mid$ $X_t$&& -0.013*** & -0.022*** & -0.015** & -0.013* & -0.0003 \\
  && (0.004) & (0.005) & (0.006) & (0.006) & (0.007) \\
 CLI &&   0.418*** & 0.336*** & 0.241*** & 0.112** & -0.017 \\
  && (0.036) & (0.044) & (0.051) & (0.052) & (0.053) \\
 Adj. $R^{2}$ &  &  \multicolumn{1}{c}{0.579} & \multicolumn{1}{c}{0.378} & \multicolumn{1}{c}{0.197} & \multicolumn{1}{c}{0.177} & \multicolumn{1}{c}{0.170} \\
 Obs & & \multicolumn{1}{c}{243} & \multicolumn{1}{c}{241} & \multicolumn{1}{c}{238} & \multicolumn{1}{c}{235} & \multicolumn{1}{c}{232} \\
 \bottomrule
\end{tabular}
\begin{tablenotes}
\item {\scriptsize \textit{Notes}: This table presents the results of the predictive regression \ref{Predicthubsmulti} comparing the predictive ability of the uncertainty hubs (CM, IN and IT) vs non-hubs (M, RE and U) sub-networks, with respect to the 3-month moving average of the Chicago FED National Activity Index (CFNAI-MA3) in Panel A. In Panel B and Panel C the results of the predictive regression with respect to the CFNAI expansion and recession periods are reported, respectively. In Panel D the results with respect to the leading indicator CLI are reported. In Panel E the results of predicting the coincident business cycle indicator controlling for CLI are reported.  The five columns of the table represent different predictability horizons with $h \in (1,3,6,9,12)$. Regressions' coefficients and standard errors (in parentheses), and adjusted-$R^2$ are reported. Coefficients are marked with *, **, *** for 10\%, 5\%, 1\% significance levels, respectively. Intercept and controls results are not reported for the sake of space, the only exception being the CLI control. Series are considered at a monthly frequency between 01-2000 and 05-2020. }
\end{tablenotes}
}
\end{threeparttable}
\end{table}

\clearpage
\newpage
\section{U.S. GDP and its volatility}

\setcounter{table}{0}
\setcounter{table}{0}
\renewcommand{\thetable}{H\arabic{table}}

\renewcommand{\thefigure}{H\arabic{figure}}

\setcounter{figure}{0}

\begin{table}[h!]
 \centering
 \caption{U.S. GDP Predictive Results} \label{GDPGRPre}
 \begin{threeparttable}
 \centering
\scriptsize{
\begin{tabular}{clccccc}
 \toprule
&&  \multicolumn{4}{c}{Panel A: GDP Growth Rate}\\
\toprule
  &&  \multicolumn{1}{c}{\textit{h=1}} &  \multicolumn{1}{c}{\textit{h=2}} & \multicolumn{1}{c}{\textit{h=3}} & \multicolumn{1}{c}{\textit{h=4}}\\
 \toprule
 $\mathcal{C}_t$ $\mid$ $X_t$    & & -0.056 & -0.158*** & -0.176*** & -0.156*** \\
  & &(0.046) & (0.041) & (0.041) & (0.041) \\
Adj. $R^{2}$ &   &    \multicolumn{1}{c}{0.015} & \multicolumn{1}{c}{0.221} & \multicolumn{1}{c}{0.218} & \multicolumn{1}{c}{0.250} \\
Obs & & \multicolumn{1}{c}{79} & \multicolumn{1}{c}{78} & \multicolumn{1}{c}{77} & \multicolumn{1}{c}{76} \\
 \toprule
&&  \multicolumn{4}{c}{Panel B: GDP Volatility}\\
\toprule
  &&  \multicolumn{1}{c}{\textit{h=1}} &  \multicolumn{1}{c}{\textit{h=2}} & \multicolumn{1}{c}{\textit{h=3}} & \multicolumn{1}{c}{\textit{h=4}}\\
\toprule
 $\mathcal{C}_t$ $\mid$ $X_t$ && 0.028* & 0.036** & 0.047*** & 0.039*** \\
  & & (0.014) & (0.014) & (0.014) & (0.013) \\
Adj. $R^{2}$ &   & \multicolumn{1}{c}{0.138} & \multicolumn{1}{c}{0.106} & \multicolumn{1}{c}{0.165} & \multicolumn{1}{c}{0.219} \\
Obs & & \multicolumn{1}{c}{79} & \multicolumn{1}{c}{78} & \multicolumn{1}{c}{77} & \multicolumn{1}{c}{76} \\
\bottomrule
\end{tabular}
\begin{tablenotes}
\item {\scriptsize \textit{Notes}: This table presents the results of the predictive regression \ref{PredictregresstotControl} between the aggregate network connectedness, and the U.S. GDP growth rate (Panel A) and GDP volatility (Panel B). The four columns of the table represent different predictability horizons with $h \in (1,2,3,4)$ quarters.   Regressions' coefficients and standard errors (in parentheses), and adjusted-$R^2$ are reported. Coefficients are marked with *, **, *** for 10\%, 5\%, 1\% significance levels, respectively. Intercept and controls results are not reported for the sake of space. Series are all taken at quarterly frequency, between 01-2000 and 05-2020.  }
\end{tablenotes}
}
\end{threeparttable}
\end{table}

\begin{table}[h!]
 \centering
 \caption{U.S. GDP Prediction with Hubs and non-Hubs} \label{GDPGRPreHUBS}
 \begin{threeparttable}
 \centering
\scriptsize{
\begin{tabular}{clccccc}
 \toprule
&&  \multicolumn{4}{c}{Panel A: GDP Growth Rate}\\
\toprule
  &&  \multicolumn{1}{c}{\textit{h=1}} &  \multicolumn{1}{c}{\textit{h=2}} & \multicolumn{1}{c}{\textit{h=3}} & \multicolumn{1}{c}{\textit{h=4}}\\
 \toprule
 $\mathcal{C}_t^{\text{hub}}$ &   & -0.055** & -0.083*** & -0.104*** & -0.081*** \\
  & (0.027) & (0.024) & (0.024) & (0.024) \\
  $\mathcal{C}_t^{\text{non-hub}}$ & & 0.017 & -0.043 & -0.017 & -0.039 \\
  & (0.044) & (0.039) & (0.039) & (0.039) \\
Adj. $R^{2}$ &   &   \multicolumn{1}{c}{0.038} & \multicolumn{1}{c}{0.223} & \multicolumn{1}{c}{0.238} & \multicolumn{1}{c}{0.246} \\
Obs & & \multicolumn{1}{c}{79} & \multicolumn{1}{c}{78} & \multicolumn{1}{c}{77} & \multicolumn{1}{c}{76} \\
 \toprule
&&  \multicolumn{4}{c}{Panel B: GDP Volatility}\\
\toprule
  &&  \multicolumn{1}{c}{\textit{h=1}} &  \multicolumn{1}{c}{\textit{h=2}} & \multicolumn{1}{c}{\textit{h=3}} & \multicolumn{1}{c}{\textit{h=4}}\\
\toprule
$\mathcal{C}_t^{\text{hub}}$ &   & 0.030*** & 0.029*** & 0.024*** & 0.015* \\
  && (0.008) & (0.008) & (0.008) & (0.008) \\
 $\mathcal{C}_t^{\text{non-hub}}$ &    & -0.016 & -0.004 & 0.014 & 0.024* \\
  & &  (0.013) & (0.013) & (0.013) & (0.013) \\
Adj. $R^{2}$ &   &    \multicolumn{1}{c}{0.236} & \multicolumn{1}{c}{0.166} & \multicolumn{1}{c}{0.170} & \multicolumn{1}{c}{0.221} \\
Obs & & \multicolumn{1}{c}{79} & \multicolumn{1}{c}{78} & \multicolumn{1}{c}{77} & \multicolumn{1}{c}{76} \\
\bottomrule
\end{tabular}
\begin{tablenotes}
\item {\scriptsize \textit{Notes}: This table presents the results of the predictive regression \ref{PredictregresstotControl} between the hubs and non-hubs networks, and the U.S. GDP growth rate (Panel A) and GDP volatility (Panel B). The four columns of the table represent different predictability horizons with $h \in (1,2,3,4)$ quarters.  Regressions' coefficients and standard errors (in parentheses), and adjusted-$R^2$ are reported. Coefficients are marked with *, **, *** for 10\%, 5\%, 1\% significance levels, respectively. Intercept and controls results are not reported for the sake of space. Series are all taken at quarterly frequency, between 01-2000 and 05-2020.  }
\end{tablenotes}
}
\end{threeparttable}
\end{table}

\end{appendices}

\end{document}